\definecolor{darkgreen}{rgb}{0,0.5,0}
\definecolor{darkblue}{rgb}{0,0,0.8}
\newcommand{\set}[1]{\left\{#1\right\}}
\newcommand{\floor}[1]{\left\lfloor #1 \right\rfloor}
\newcommand{\ceil}[1]{\left\lceil #1 \right\rceil}
\newcommand{\hide}[1]{}
\renewcommand{\include}{\input}
\newcommand{\todo}[1]{{\color{red}\bf [TODO: #1]}}
\renewcommand{\paragraph}[1]{\smallskip\noindent\textbf{#1: }}
\newtheorem{theorem}{Theorem}[section]
\newtheorem{lemma}[theorem]{Lemma}
\newtheorem{corollary}[theorem]{Corollary}
\newtheorem{definition}{Definition}[section]
\newcommand{\calM}{\ensuremath{\mathcal{M}} }
\newcommand{\RN}[1]{%
  \textup{\uppercase\expandafter{\romannumeral#1}}%
}
\title{Multi-Message Broadcast in Dynamic Radio Networks\footnote{Research supported by ERC Grant No. 336495 (ACDC)}}
\author{Mohamad Ahmadi, Fabian Kuhn
\\[.4cm]
Department of Computer Science\\
University of Freiburg\\
Freiburg, Germany\\
{\small \texttt{\{mahmadi, kuhn\}@cs.uni-freiburg.de}}}
\date{}
\begin{document}

\maketitle              

\thispagestyle{empty}

\begin{abstract}
   We continue the recent line of research studying information
  dissemination problems in adversarial dynamic radio networks. We
  give two generic algorithms which allow to transform generalized
  version of single-message broadcast algorithms into multi-message
  broadcast algorithms. Based on these generic algorithms, we obtain
  multi-message broadcast algorithms for dynamic radio networks for a
  number of different dynamic network settings. For one of the
  modeling assumptions, our algorithms are complemented by a lower
  bound which shows that the upper bound is close to optimal.
\end{abstract}
\vspace{.5cm}

\section{Introduction}
\label{sec:intro}
When developing algorithms for wireless networks, one has to deal with
unique challenges which are not or much less present in the context of
wired networks. All nodes share a single communication medium and
whether a transmitted signal can be received by a given node might
therefore depend on the behavior of all other nodes in the
network. Moreover, the reception of a wireless signal can be influenced by additional wireless devices, multi-path
propagation effects, other electric appliances, or also various
additional environmental conditions, see e.g.,
\cite{kim06,newport:2007,ramachandran07,yarvis02,srinivasan08}. As a
consequence, wireless connections often tend to be unstable and
unreliable. Moreover, wireless devices might be mobile, in which case
connectivity changes even when ignoring all the above effects. We
therefore believe that in order to study such dynamic and
unpredictable communication behavior, it is important to also study
unreliable and dynamic variants of classic wireless communication
models such as, e.g., the classic radio network model introduced in
\cite{chlamtac85,bgi1} or the more complex but also more realistic
models like the SINR model \cite{gupta:2000,moscibroda06} or the
affectance model \cite{affectance}.

In recent years, there has been a considerable effort in investigating
basic communication and distributed computation problems in radio
network models which exhibit adversarial dynamic, nondeterministic
behavior. In \cite{clementi04}, Clementi et al.\ study the problem of
broadcasting a single message in a synchronous dynamic radio network
where in each round a subset of the links might fail
adversarially. Communication is modeled using the standard graph-based
radio network model \cite{bgi1}. In each round, a node can either
transmit or listen and a node $u$ receives a message transmitted by a
neighbor $v$ in a given round $r$ if and only if $v$ is the only
round-$r$ neighbor of $u$ transmitting in round $r$. The paper studies
deterministic algorithms and it in particular shows that if $D$ is the
diameter of the fault-free part of the network, deterministic
single-message broadcast requires time $\Theta(Dn)$, where $n$ is the
number of nodes of the network. In \cite{clementi09}, Clementi et al.\
study an even more dynamic network model where the network topology
can completely change from round to round. It is in particular shown
that the single-message broadcast problem can be solved in time
$O(n^2/\log n)$ by a simple randomized algorithm where in each round, each node
knowing the broadcast message, transmits it with probability
$\ln(n) / n$. It is also shown that the asymptotic time complexity of
this algorithm is optimal. A similar model to the one in
\cite{clementi04} has been studied in \cite{dualgraph}. In the
\emph{dual graph} model of \cite{dualgraph}, it is assumed that the set
of nodes is fixed and the set of edges consists of two parts, a
connected set of reliable edges and a set of unreliable edges. In each
round, the communication graph consists of all reliable and an
arbitrary (adversarially chosen) subset of the unreliable edges. Among
other results, it is shown that there is a randomized algorithm which
achieves single-message broadcast in time $O(n\log^2 n)$. The
algorithm in \cite{dualgraph} works in the presence of a
\emph{strongly adaptive} adversary which determines the set of edges
of a given round $r$ after the distributed algorithm decides what
messages are transmitted by which nodes in round $r$. In
\cite{obliviousDG}, the same problem is considered for weaker
adversaries. A \emph{weakly adaptive} adversary has to determine the
topology of round $r$ before a randomized distributed algorithm
determines the randomness of round $r$ and an \emph{oblivious
  adversary} has to determine the whole sequence of network topologies
at the very beginning of the execution of a distributed
algorithm. Additional problems in the dual graph model of
\cite{dualgraph,obliviousDG} have also been studied in
\cite{DG_structuring,ghaffari16,DG_localbroadcast}.

The dynamic network models of the above papers can (mostly) be seen as
the extreme cases of the \emph{$T$-interval connected} dynamic graph
model of \cite{kuhn:stoc10}. For a positive integer $T$, a dynamic
network is called $T$-interval connected, if for any time interval of
$T$ consecutive rounds the graph which is induced by the set of edges
present throughout the whole time interval is connected. Hence, for
$T=1$, the network graph has to be connected in every round, whereas
for $T=\infty$, we obtain the dual graph model. In \cite{opodis15},
the single-message broadcast problem has been studied for general
$T$-interval connected dynamic radio networks and for a more
fine-grained adversary definition. More specifically, for an integer
$\tau\geq 0$, an adversary is called $\tau$-oblivious if the network
topology of round $r$ is determined based on the knowledge of all
random decisions of the distributed algorithm of rounds
$1,\dots,r-\tau$. Hence, an oblivious adversary is $\infty$-oblivious,
a weakly adaptive adversary is $1$-oblivious, and a strongly adaptive
adversary is $0$-oblivious.
\\[.1cm]
\noindent
\textbf{Additional Related Work.} 
Information dissemination and specifically broadcasting is a
fundamental problem in networks and therefore there exists a rich
literature on distributed algorithm to solve broadcast in various
communication settings. In particular, the problem is 
well-studied in static wireless networks (see, e.g.,
\cite{bgi1,bgi2,chlamtac85,chlebus:2000,clementi:2001} and many
more). More recently, the multi-message problem has been studied
quite extensively in the context of wireless networks, see e.g.,
\cite{chrobak:2004,Chlebus09,Christersson,gasieniec,khabbazian11,maclayer,kuhn:2011}.

We also note that several dynamic network models similar to the one in
\cite{kuhn:stoc10} have been studied prior to \cite{kuhn:stoc10}, for
example in
\cite{avin08,baumann09,time-varying,clementi09,odell05}. For a recent
survey, we also refer to \cite{kuhn:2011:survey}. In addition to the
works already mentioned, information dissemination and other problems
in faulty and dynamic radio network models have also been studied in,
e.g., \cite{fernandezanta12,clementi04,kranakis01}.

\subsection{Contributions}
\label{sec:results}

In the present work, we extend \cite{opodis15} and more generally the
above line of research and we study the multi-message broadcast
problem in $T$-interval connected dynamic radio network against a
$\tau$-oblivious adversary. For some $s\geq 1$, we assume that there
are $s$ broadcast messages, each initially given to a single source
node. A randomized distributed algorithm solves the $s$-multi-message
broadcast problem if with high probability (w.h.p.), it disseminates
all $s$ broadcast messages to all nodes of the network. We say that
the communication capacity of a network is $c\geq 1$ if every message
sent by a node can contain up to $c$ different broadcast
messages.
	 
\subsubsection{Upper Bounds}

Most of our upper bounds are achieved by \emph{store-and-forward}
algorithms, i.e., by algorithms which treat the broadcast messages as
single units of data that cannot be altered or split into several
pieces. All our store-and-forward protocols are based on two generic
algorithms which allow to turn more general variants of single-message
broadcast algorithms into multi-message broadcast algorithms. In
addition, Theorem~\ref{thm:04} proves an upper bound that can be achieved
using random linear network coding. When dealing with linear network
coding algorithms, we use the common convention and disregard any
overhead created by the message headers describing the linear
combination of broadcast messages sent in the message (see, e.g.,
\cite{gossip}).\footnote{Note that this assumption is reasonable as long as
  the number of broadcast messages which are combined with each other
  is at most the length of a single broadcast message (in bits).}

\begin{theorem}\label{thm:01}
  Assume that we are given an $\infty$-interval connected dynamic
  $n$-node network controlled by a $0$-oblivious adversary.  Using
  store-and-forward algorithms, for communication capacity $c=1$,
  $s$-multi-message broadcast can be solved, w.h.p., in time
  $O\big(ns\log^2n\big)$, whereas for arbitrary $c\geq 1$, it can be
  solved, w.h.p., in time $O\big(\big(1+\frac{s}{c}\big)n\log^4 n\big)$.
\end{theorem}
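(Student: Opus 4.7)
The plan is to invoke, as a black box, the $O(n \log^2 n)$-round single-message broadcast algorithm for the dual graph model against a $0$-oblivious adversary from \cite{dualgraph}. Since the dual graph model is precisely the $\infty$-interval connected setting, that primitive applies verbatim, and both bounds should follow by feeding it into the two generic store-and-forward reductions developed earlier in the paper.

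For the first bound ($c = 1$), the capacity restriction forbids any real pipelining, since at most one broadcast message can cross an edge per round. I would therefore apply the sequential generic reduction: schedule the $s$ messages one after another, allotting a window of $\Theta(n \log^2 n)$ rounds to each, and amplify the primitive's per-invocation success probability (by a constant-factor blow-up of the window length) so that a union bound over the $s$ windows still gives success with high probability overall. All nodes agree on which window corresponds to which message via a fixed deterministic schedule on the message identifiers, so no additional coordination is required. The total running time is $O(ns \log^2 n)$.

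For the second bound (general $c$), pipelining is effective and I would apply the batched generic reduction, which groups messages into super-messages of up to $c$ broadcast messages and then disseminates each super-message using a single call of the primitive. The delicate point is that the batches cannot be fixed a priori, since every message initially resides at a single, potentially unknown source; the reduction resolves this by having each source hash its message into a random bucket and repeating this assignment over $\Theta(\log^2 n)$ phases with fresh randomness. After that many phases every broadcast message has, w.h.p., landed in a well-balanced bucket (of size at most, say, $2c$) at least once, so every node has received every message. The cost is $O((1 + s/c)\log^2 n)$ invocations of the primitive at $O(n \log^2 n)$ each, giving the claimed $O((1 + s/c)\, n \log^4 n)$.

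The principal obstacle is the correctness of the batched reduction against a $0$-oblivious adversary: the adversary sees all choices made so far and can tailor each round's topology to hurt one bucket maximally. The key point is that the hashing randomness and the per-round primitive randomness are independent, so the adversary cannot simultaneously defeat both, and a balls-and-bins concentration argument combined with the amplified-success version of the primitive ensures that every message is eventually delivered. A secondary bookkeeping issue is that nodes need not know $s$ or $c$ in advance; this is absorbed by a standard geometric doubling on the estimate of $s$, which only changes constants in the final bound.
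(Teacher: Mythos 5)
There are genuine gaps in both reductions, and they share a common root cause: your proposal assumes a degree of global coordination among the sources that the model does not provide. For the $c=1$ case, a ``fixed deterministic schedule on the message identifiers'' is not implementable. Each broadcast message initially resides at a single source, and a source knows neither the identifiers of the other $s-1$ messages nor its own rank among them; the problem definition does not equip the messages with globally known dense indices in $[s]$. Hence no node can compute which of the $s$ windows belongs to which message, and if you instead index windows by raw identifiers the schedule length blows up to the size of the identifier space. This coordination problem is exactly what the paper's generic algorithm (Algorithm~\ref{alg:Galgorithm}) is built to solve: in each phase every undelivered message is marked independently with probability $1/x$, a \emph{concurrency-resistant} variant of the harmonic broadcast (two runs of the primitive, with a $\bot$-flood by any node that saw two distinct messages) lets all nodes detect whether exactly one message was selected, and the count $x$ is decremented consistently; $O(s+\log n)$ phases then suffice w.h.p. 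Your plan uses the $O(n\log^2 n)$ primitive of the dual-graph paper correctly as a black box, but omits the success-detection mechanism that replaces the impossible deterministic schedule.

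For general $c$, the batched reduction fails for a more physical reason: a super-message containing $c$ broadcast messages can only be transmitted by a node that actually \emph{holds} all $c$ of them, and initially the $c$ messages of any bucket sit at $c$ distinct source nodes. Hashing messages into random buckets does not move any data, so there is no node that can ``disseminate the super-message using a single call of the primitive.'' Collecting enough messages at single nodes is precisely the difficulty the paper addresses with the Chrobak-et-al.\ doubling scheme: $\lceil\log_2 n\rceil$ phases in which the number of nodes knowing each message at least doubles, implemented via $2^i$-limited (truncated harmonic) broadcasts initiated by nodes marked with probability $1/n$, each sending a random $c$-subset of the messages it currently knows; the distributed coupon-collection lemma (Lemma~\ref{lemma:couponcoll}) then bounds the number of such invocations per phase by $O\bigl(\frac{ns}{c\,2^i}\log n\bigr)$, yielding the $O\bigl((1+\frac{s}{c})n\log^4 n\bigr)$ total. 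Your independence argument about hashing randomness versus primitive randomness does not repair this, because the obstacle is not adversarial at all --- it is that no single node possesses the batch. To fix the proposal you would need to replace the bucketing step with some mechanism that physically aggregates messages at nodes, which is essentially the paper's Algorithm~\ref{alg:CCLalgorithm}.
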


\begin{theorem}\label{thm:02} 
  Consider the $s$-multi-message broadcast problem in $1$-interval
  connected dynamic $n$-node networks controlled by a $1$-oblivious
  adversary. Using store-and-forward algorithms, for communication
  capacity $c=1$, the problem can be solved, w.h.p., in time
  $O\big( (1+\frac{s}{\log n})n^2\big)$, and for arbitrary $c\geq 1$,
  it can be solved, w.h.p., in time $O\big((1+\frac{s}{c})n^2\log n\big)$.
\end{theorem}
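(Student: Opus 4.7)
The plan is to combine an appropriate single-message broadcast primitive with the two generic single-to-multi-message reductions that underlie all store-and-forward upper bounds in this paper.

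I would begin by establishing a single-message primitive for this setting. The natural candidate is a Clementi-et-al.\ style protocol where every informed node transmits with probability $\Theta(\log n / n)$ in every round. Against a $1$-oblivious adversary, which fixes the round-$r$ topology before observing any round-$r$ coin flips, a standard calculation shows that every uninformed node having at least one informed neighbor in the round's graph receives the message with probability $\Omega(\log n / n)$. Since the network is $1$-interval connected, such a ``boundary'' node exists in every round, so the set of informed nodes grows by one every $O(n/\log n)$ rounds in expectation; Chernoff-type concentration upgrades this to an $O(n^2/\log n)$ high-probability bound. Importantly for the reductions, this primitive is content-oblivious: the transmission schedule depends only on whether a node is informed, not on the message identity.

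For the $c=1$ case I would invoke the first generic reduction, which reuses a single transmission schedule across all $s$ messages by letting each informed node, whenever it decides to transmit, send a uniformly random message from its pool of known messages. Plugging the $O(n^2/\log n)$ primitive into this reduction yields $O\big(n^2 + s n^2/\log n\big) = O\big((1+s/\log n) n^2\big)$ rounds w.h.p., where the additive $n^2$ term accounts for the setup overhead of the reduction (essentially a union bound over messages absorbed into the primitive). For arbitrary $c\geq 1$ I would invoke the second generic reduction, which exploits the larger packet size by letting each informed node that transmits pack up to $c$ random messages from its pool into its transmission. Relative to the single-message case, the per-message progress probability drops by roughly a factor $s/c$, and extending the high-probability guarantee to all $sn$ (message, node) pairs costs an additional $\log n$ factor via union bound, yielding $O\big((1+s/c) n^2 \log n\big)$.

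The most delicate point will be verifying that the primitive retains its $\Omega(\log n / n)$ per-round per-target reception probability once the ``which message to send'' randomness is interleaved with the ``whether to send'' randomness, and once the adversary is allowed to influence the distribution of informed nodes through its topology choices. Because the $1$-oblivious adversary must commit to the round-$r$ topology without seeing any round-$r$ randomness, one can couple the two independent sources of round-$r$ randomness conditionally on the topology and apply the single-message analysis to each message separately; a careful union bound over the $sn$ message-node pairs then delivers the stated time bounds. The two generic reductions handle this accounting; the task here is to check that the primitive described above meets their interface requirements, which, because the adversary cannot see the current round's randomness, reduces to the calculation sketched in the previous paragraph.
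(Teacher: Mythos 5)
Your single-message primitive is exactly the one the paper uses (the Clementi et al.\ protocol with transmission probability $\ln(n)/n$, giving $O(n^2/\log n)$ against a $1$-oblivious adversary), but the two ``generic reductions'' you plug it into are not the paper's reductions, and the pipelined scheme you actually describe does not deliver the claimed bounds. You propose running all $s$ messages through a single transmission schedule, with each transmitting node sending a uniformly random message (or a random $c$-subset) from its pool, and you account for the interference as a factor-$s/c$ loss in the per-message progress probability. That accounting fails at high-degree boundary nodes: consider a node $v$ that is uninformed about $\mathcal{M}$, has exactly one neighbor knowing $\mathcal{M}$, and has $\Theta(n)$ neighbors informed about other messages, all transmitting with probability $\ln(n)/n$. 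The probability that $v$ receives $\mathcal{M}$ in that round is $\Theta\bigl(\frac{\ln n}{n}\cdot\frac{1}{s}\cdot(1-\frac{\ln n}{n})^{n-1}\bigr)=\Theta\bigl(\frac{\ln n}{s n^2}\bigr)$ --- a factor $sn$ worse than the single-message case, not $s$. (In the single-message analysis the $(1-\ln n/n)^{d-1}\approx 1/n$ collapse is compensated because \emph{every} isolated transmitter carries the unique message; with concurrent messages it is not.) So the per-message, per-round coupling you sketch in your last paragraph does not reduce to the single-message calculation.

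The paper's actual route for $c=1$ is sequential, not pipelined: it needs a \emph{concurrency-resistant} single-message broadcast (all nodes must detect whether exactly one source injected a message), which is then driven by Algorithm~\ref{alg:Galgorithm} (each of the $x$ remaining messages is marked with probability $1/x$; a phase succeeds when exactly one is marked), giving $O((s+\log n)\cdot n^2/\log n)$ by Lemma~\ref{lem:Galgorithm}. The concurrency-resistance is the real technical content here and is entirely missing from your proposal: the paper runs the primitive once with all sources active, then spends $O(n\log n)$ extra rounds (each node transmitting its message with probability $1/n$) to force two distinct messages to meet at a single node whenever there are $\geq 2$ sources, and finally reruns the primitive with such nodes broadcasting $\bot$ so that everyone can detect failure. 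For arbitrary $c$, the paper instead uses a $k$-limited broadcast of cost $O(kn/\log n)$ inside the coupon-collection Algorithm~\ref{alg:CCLalgorithm}, whose doubling phase structure (in phase $i$ each message is already held by $2^{i-1}$ nodes, so fewer initiations are needed as $k$ grows) is what makes the $O(\frac{ns}{c}\cdot\frac{n}{\log n}\cdot\log^2 n)$ accounting via Corollary~\ref{cor:CCLalgorithm} go through; your description of ``pack $c$ random messages into each transmission'' omits this structure and with it the justification for the bound.
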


\begin{theorem}\label{thm:03}
  Let $T \geq 1$ and $\tau \geq 1$ be positive integer parameters. In
  $T$-interval connected dynamic networks controlled by a
  $\tau$-oblivious adversary, for communication capacity $c=1$, the
  $s$-multi-message broadcast problem can be solved, w.h.p., in time
  \[
  O\left(\left(1+\frac{n}{min\{\tau , T\}}\right)\cdot ( s+\log
    n)\cdot n\cdot \log^3n\right),
  \]	
  and for an arbitrary $c$, it can be solved, w.h.p., in time
  \[
  O\left(\left(1+\frac{n}{min\{\tau , T\}}\right)\cdot \frac{ns}{c}\cdot\log^4n\right).
  \]
\end{theorem}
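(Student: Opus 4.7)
The strategy is to use the generic store-and-forward single-to-multi-message reductions developed in the paper (and underlying Theorems~\ref{thm:01} and~\ref{thm:02}) as a black box, and to design a single-message broadcast protocol tailored to $T$-interval connected networks against a $\tau$-oblivious adversary. Abstractly, the reductions turn a single-message algorithm of round complexity $R$ into an $s$-multi-message algorithm of complexity $O\bigl((s+\log n)\cdot R\cdot \log^{O(1)} n\bigr)$ for $c=1$ and $O\bigl((s/c)\cdot R\cdot \log^{O(1)} n\bigr)$ for arbitrary capacity~$c$. Setting $L := \min\{\tau,T\}$, it therefore suffices to exhibit a single-message protocol of round complexity
\[
R \;=\; O\!\left(\left(1+\frac{n}{L}\right)\cdot n\cdot \log^{O(1)} n\right).
\]

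\textbf{Single-message subroutine.} Partition time into epochs of length $L$. Inside a single epoch the following two properties hold simultaneously: (i) by $T$-interval connectivity and $L\le T$, there is a connected spanning subgraph $H$ of stable edges present in every round of the epoch; (ii) by $\tau$-obliviousness and $L\le \tau$, the entire sequence of adversarial graphs within the epoch is determined before any random bit used in the epoch is revealed. Consequently, conditioned on the algorithm's history up to the start of the epoch, the epoch behaves like a window of a dual-graph execution against an oblivious adversary, for which single-message broadcast algorithms of round complexity $O(n\cdot \log^{O(1)} n)$ are known (e.g.\ \cite{obliviousDG,opodis15}). When $L = \Omega(n\cdot \log^{O(1)} n)$, a single epoch is sufficient to complete the broadcast, giving $R = O(n\cdot \log^{O(1)} n)$. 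When $L$ is smaller, the dual-graph protocol is run in a pipelined fashion across consecutive epochs; since the informed set is monotone, progress made in one epoch is preserved in the next, and $O(n/L)$ epochs each of length~$L$ suffice to finish, again yielding the claimed $R$.

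\textbf{Analysis and combination.} Combining the two regimes gives $R = O((1+n/L)\cdot n\cdot \log^{O(1)} n)$ rounds for the single-message subroutine. Feeding this $R$ into the generic multi-message reductions immediately yields
\[
O\!\left(\left(1+\frac{n}{\min\{\tau,T\}}\right)\cdot (s+\log n)\cdot n\cdot \log^3 n\right)
\]
for $c=1$ and
\[
O\!\left(\left(1+\frac{n}{\min\{\tau,T\}}\right)\cdot \frac{ns}{c}\cdot \log^4 n\right)
\]
for arbitrary $c\ge 1$, matching both bounds in the statement. The polylog exponents $3$ and $4$ are inherited from the single-message subroutine together with the overhead of the reduction.

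\textbf{Main obstacle.} The delicate step is justifying, inside a single epoch, the reduction to the oblivious dual-graph model. One conditions on all randomness used strictly before the epoch; the $\tau$-obliviousness assumption with $\tau\ge L$ then implies that the adversary's choice of the $L$ graphs of the epoch is deterministic relative to this conditioning, whereas the randomness driving the algorithm within the epoch is fresh and independent of that choice, so the standard oblivious-adversary analysis applies verbatim. A secondary subtlety is verifying that the reductions from the previous theorems are compatible with the epoch-level restart and pipelining: this reduces to observing that those reductions treat the single-message routine purely as a scheduler of message transmissions, so that monotone progress (both in informed nodes and in delivered broadcast messages) carries across epoch boundaries without modification.
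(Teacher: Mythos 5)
Your high-level plan coincides with the paper's: reduce to single-message broadcast in the $T$-interval connected, $\tau$-oblivious setting (via epochs of length $\min\{\tau,T\}$, essentially the algorithm of \cite{opodis15}) and feed the result into the two generic reductions. However, your black-box framing of those reductions hides the two steps that constitute the actual content of the proof, and one of them is quantitatively essential. The capacity-$c$ reduction (Algorithm~\ref{alg:CCLalgorithm}, Corollary~\ref{cor:CCLalgorithm}) does not take a single-message broadcast algorithm of complexity $R$ as input; it takes a family of $k$-limited broadcasts and requires $t(k,n)\le k\cdot t(n)$, i.e.\ cost \emph{linear in the target count} $k$. If you instead plug in the full single-message routine of cost $R$ for every level $\ell_i=2^i$, Lemma~\ref{lem:CCLalgorithm} gives $O\bigl(\sum_i \frac{ns}{c2^i}R\log n\bigr)=O\bigl(\frac{ns}{c}R\log n\bigr)$, which with $R=\Theta\bigl((1+n/L)n\,\mathrm{polylog}\,n\bigr)$ is a factor of roughly $n$ worse than the claimed bound. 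Establishing the linear-in-$k$ scaling is exactly where the paper works: it shows that a successful phase of length $\psi=\min\{\tau,T\}$ informs an $\Omega(\psi/(n\log^2 n))$-fraction of the \emph{uninformed} nodes, so that as long as $k\le n/2$ the uninformed set stays large and $O(k\log^2(n)/\psi)$ successful phases suffice, with the case $k>n/2$ handled by the full algorithm. Your proposal never performs this argument.

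The second gap concerns the $c=1$ bound: Algorithm~\ref{alg:Galgorithm} requires a \emph{concurrency-resistant} single-message broadcast, i.e.\ one in which all nodes detect whether exactly one source initiated a successful broadcast or whether there were zero or several. A plain single-message broadcast of complexity $R$ does not provide this, and your proposal does not say how detection is achieved. The paper's construction is not automatic: it argues that after one run of the \cite{opodis15} algorithm with multiple sources, either some node already holds two distinct broadcast messages or all nodes hold exactly one, in which case an additional $O(n\log n)$ rounds (each informed node transmitting with probability $1/n$) create such a node across some edge of the current graph; a second run then disseminates $\bot$. Your epoch/pipelining sketch for the single-message subroutine itself is fine in spirit (it is essentially \cite{opodis15}, which the paper simply cites), but without the $k$-limited and concurrency-resistant constructions the reduction cannot be invoked and neither stated bound follows.
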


\begin{theorem} \label{thm:04} 
  Using linear network coding, in $1$-interval connected dynamic
  networks with communication capacity $1$ and a
  $1$-oblivious adversary, $s$-multi-message broadcast can be solved
  in time $O(n^2+ ns)$, w.h.p.
\end{theorem}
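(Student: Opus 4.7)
The plan is to analyze the standard random linear network coding (RLNC) broadcast algorithm. In each round, every node independently transmits with probability $p=1/n$; when it transmits, it sends a uniformly random $\mathbb{F}_q$-linear combination of the coded vectors currently in its subspace $V_r(v)\subseteq\mathbb{F}_q^s$, where $q=\mathrm{poly}(n)$. A listener that hears a unique transmitter appends the received vector to its subspace, and each of the $s$ sources starts with the standard basis vector of its own message. The goal is to show that after $T=O(n^2+ns)$ rounds, every node $v$ has $\dim V_T(v)=s$, so that all $s$ messages can be recovered by Gaussian elimination.

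The core analytical tool is a reduction to a family of ``virtual'' single-message broadcasts, one per hyperplane of $\mathbb{F}_q^s$. For any hyperplane $H\subsetneq\mathbb{F}_q^s$, define $B_r(H)=\{v:V_r(v)\subseteq H\}$; this set is monotone non-increasing in~$r$, and the algorithm succeeds exactly when $B_r(H)=\emptyset$ for every~$H$. Two observations drive the analysis: (a)~whenever $\emptyset\neq B_r(H)\subsetneq V$, 1-interval connectivity combined with 1-obliviousness (which fixes $G_r$ before the round's random choices) guarantees a cut edge $(u,v)\in G_r$ with $u\notin B_r(H)$, $v\in B_r(H)$, across which a successful transmission occurs with probability $\Omega(1/n)$; (b)~conditioned on delivery, the random combination $u$ sends lies outside $H$ with probability at least $1-1/q$, since $V_r(u)\cap H$ is a proper subspace of $V_r(u)$. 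Consequently, $B_r(H)$ evolves exactly as the uninformed set in a single-message broadcast, up to a $(1-1/q)$ slowdown, and a standard single-message broadcast analysis yields $B_T(H)=\emptyset$ with failure probability $e^{-\Omega(n)}$ after $T=O(n^2)$ rounds.

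The main obstacle is the union bound: $\mathbb{F}_q^s$ contains $\Theta(q^{s-1})$ hyperplanes, which can far exceed the single-$H$ failure probability when $s$ is large. I would circumvent this by a per-node argument: once a node $v$ has received $\Omega(s+\log n)$ random linear combinations originating from a collection of nodes whose aggregate subspace is already $\mathbb{F}_q^s$, the probability that $V(v)$ lies in \emph{any} hyperplane is bounded by $q^{-\Omega(s+\log n)}=n^{-\Omega(1)}$ (choosing $q=n^{\Theta(1)}$ large enough), which then union-bounds cleanly over the $n$ receivers.

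The target time $O(n^2+ns)$ then decomposes naturally into two phases. The first $O(n^2)$ rounds are used to propagate at least one vector from the ``right'' side of every hyperplane to every node via the cut-edge/virtual-broadcast argument above, which in particular ensures that every node's neighborhood-aggregate span reaches $\mathbb{F}_q^s$. The subsequent $O(ns)$ rounds pump $\Theta(s+\log n)$ further independent random combinations to every node: a potential argument on $\sum_v \dim V_r(v)$ shows the expected per-round increase is $\Omega(1)$ once aggregate spans are full, and a martingale concentration argument upgrades this to a high-probability per-node guarantee. The hardest step will be turning the per-round expected-progress lower bound into a high-probability guarantee that holds simultaneously for all $n$ nodes across dynamically changing cut-edges, which will require carefully aggregating the Bernoulli successes contributed by different graphs $G_r$ while controlling the dependencies introduced by the common random choices of each round.
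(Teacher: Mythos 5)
Your core reduction is the same as the paper's: per ``direction'' virtual single-message broadcasts, where a direction is a hyperplane $H$ (equivalently, a coefficient vector $\vect{\mu}$ with $H=\vect{\mu}^\perp$, the paper's ``knows about $\vect{\mu}$'' formulation from Haeupler), a cut edge guaranteed by $1$-interval connectivity gives a successful delivery with probability $\Omega(1/n)$ per round, and the delivered combination escapes $H$ with probability $1-1/q$. Where you diverge is in handling the union bound over directions, and that is where your argument breaks. The paper's resolution is much simpler than what you propose: take $q$ constant, run for $\Theta(n(n+s))$ rounds, so that each fixed direction fails with probability $e^{-\Omega(n+s)}$ by a Chernoff bound on the $\Theta(n(n+s))$ Bernoulli progress trials of rate $1/(2en)$; this beats the union bound over the $q^{s}$ directions precisely because of the $+s$ in the exponent. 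In other words, the $+ns$ term in the stated running time \emph{is} the price of the union bound, and no second mechanism is needed.

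Your two-phase workaround has two concrete problems. First, phase~1 as you describe it --- ``propagate at least one vector from the right side of every hyperplane to every node'' in $O(n^2)$ rounds --- would already give every node a subspace contained in no hyperplane, i.e.\ full rank, completing the broadcast in $O(n^2)$ rounds total; this contradicts the $\Omega(ns/c)$ lower bound of Theorem~\ref{thm:05}, so that phase cannot achieve what you claim when $s=\omega(n)$. The union bound over hyperplanes is not an artifact you can defer to a second phase; it is exactly where the $ns$ term must be paid. Second, the per-node argument is unsound as stated: receiving $\Omega(s+\log n)$ random combinations from senders whose \emph{aggregate} span is $\mathbb{F}_q^s$ does not give an escape probability of $1-1/q$ per reception for every hyperplane $H$, because an individual sender whose own subspace is contained in $H$ contributes a combination that lies in $H$ with probability $1$. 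You would need, for each $H$ separately, enough receptions from senders not contained in $H$, which circles back to the per-hyperplane broadcast analysis you were trying to avoid. I recommend replacing the second half of your argument with the direct Chernoff-plus-union-bound computation over all $q^s$ coefficient vectors after $\Theta(n(n+s))$ rounds.
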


\subsubsection{Lower Bound}

\begin{theorem}\label{thm:05}
  In $\infty$-interval connected dynamic networks with communication
  capacity $c\geq 1$ and being controled by a $0$-oblivious adversary, any
  $s$-multi-message broadcast algorithm requires at least time
  $\Omega(ns/c)$, even when using network coding.  Further, there is a
  constant-diameter $\infty$-interval connected network with
  communication capacity $1$ such that any store-and-forward algorithm
  requires at least $\Omega\big((ns-s^2)/c\big)$ rounds to solve $s$-multi-message
  broadcast against a $0$-oblivious adversary.
\end{theorem}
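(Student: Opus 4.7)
The plan is to prove both lower bounds via explicit adversarial constructions paired with a counting argument; both parts share the same central mechanism, in which the $0$-oblivious adversary uses unreliable edges to cause collisions at almost every non-chosen receiver in each round, while Part~2 additionally requires a constant-diameter reliable backbone.

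For the general $\Omega(ns/c)$ bound (Part~1), I would take a network whose reliable edges form a sparse connected backbone --- for concreteness, a Hamiltonian path $v_1-v_2-\cdots-v_n$ --- with all remaining pairs declared unreliable, and would place the $s$ broadcast messages at $s$ distinguished source vertices. The $0$-oblivious adversary, knowing the transmitter set $T$ of each round, acts as follows. If $|T|=1$ no collisions can be induced, but the only receivers of the single transmitter are its (at most two) reliable neighbors. If $|T|\ge 2$, for every transmitter $t\in T$ and every node $u$ that would otherwise receive from $t$ (except one designated pair $(t^*,u^*)$), the adversary adds an unreliable edge from some other transmitter in $T$ to $u$, creating a collision there. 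In either case the number of nodes successfully receiving a packet is $O(1)$, so the total amount of fresh information the network absorbs in one round is $O(c)$ units. Since every non-source node must accumulate $\Omega(s)$ linearly independent packets to decode all $s$ messages --- an argument that remains valid under random linear network coding --- and there are $\Omega(n)$ such deficient nodes, the total information demand is $\Omega(ns)$ units, and $\Omega(ns/c)$ rounds are therefore necessary.

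For Part~2 I would use a constant-diameter $\infty$-interval connected graph designed so that the same collision-forcing strategy still applies --- for instance, a graph on $s$ sources and $n-s$ receivers whose reliable backbone has diameter $O(1)$ (say a double star or a star-with-matching), with every remaining pair declared unreliable. The adversary again restricts each round to $O(c)$ source-to-receiver message deliveries. Under store-and-forward every successful reception carries one of the $s$ broadcast messages verbatim (or up to $c$ of them bundled), so each of the $s(n-s) = ns - s^2$ source--receiver pairs must be realized by a distinct packet reception. Dividing the total by the per-round delivery capacity gives the claimed $\Omega((ns-s^2)/c)$ bound.

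The main obstacle will be verifying the collision-forcing strategy in detail: for every transmitter set $T$ the algorithm might choose, one has to exhibit a set of unreliable edges --- consistent with preserving the reliable backbone --- such that the number of nodes successfully receiving a packet in that round is $O(1)$. This requires a short case analysis on the structure of $T$ inside the backbone. The second subtle point is the network-coding argument in Part~1, for which I would show that the sum over all nodes of the dimension of the subspace each node has learned grows by only $O(c)$ per round, so decoding (dimension $s$ at $\Omega(n)$ nodes) cannot be achieved before round $\Omega(ns/c)$. The $0$-obliviousness is essential here, since it allows the adversary to commit its collision pattern after the nodes have fixed their transmitted linear combinations, and hence to always steer the few packets that do get through toward the nodes whose reception adds the least to the total dimension.
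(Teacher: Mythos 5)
Your Part~1 is essentially the paper's argument (its Lemma on bounded-degree stable subgraphs): a constant-degree reliable backbone, completion of the graph to force collisions whenever two or more nodes transmit, and an information/dimension count of $O(cB)$ bits absorbed per round against a total demand of $\Theta(nsB)$. The variant where you allow one designated pair to succeed under $|T|\ge 2$ is harmless, and the Hamiltonian path with $\Delta=2$ is exactly the kind of stable subgraph the paper uses. This half is fine.

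Part~2, however, has a genuine gap, and it is precisely the difficulty that forces the paper into a different technique. Your claim that ``the adversary again restricts each round to $O(c)$ source-to-receiver message deliveries'' cannot hold over a constant-diameter reliable backbone: when exactly one node transmits, a $0$-oblivious adversary has no way to create a collision, and that lone transmission is received by \emph{all} of the transmitter's reliable neighbors. Any constant-diameter graph contains nodes of polynomially large (in your double-star, linear) degree, so a single isolated transmission from a hub covers $\Omega(n)$ of your $ns-s^2$ source--receiver pairs at once; with a known topology the algorithm can collect all messages at the hub and flood them out in roughly $O(n+s)$ isolated transmissions, far below $\Omega(ns-s^2)$. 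The counting argument ``each pair must be realized by a distinct reception'' is therefore false, and no case analysis on $T$ repairs it. The paper instead builds an $s$-clique-star in which each of $s$ external nodes hangs off a single, \emph{randomly chosen and hidden} bridge node of an $(n-s)$-clique; the adversary completes the graph except that it severs the one edge to $e_i$ whenever the lone transmitter carrying $\mathcal{M}_i$ is not $e_i$'s bridge. Delivering $\mathcal{M}_i$ to $e_i$ then requires the algorithm to effectively guess the hidden pair (bridge identity, message index), and the proof reduces to Newport-style $(n-s,s)$-hitting game with an $\Omega\bigl((n-s)s\bigr)$ lower bound. Your construction has no such hidden structure and your argument does not engage with the indistinguishability that makes the constant-diameter case hard; as written, Part~2 does not go through.
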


\hide{
\subsection{Additional Related Work}
\label{sec:relwork}

\todo{Make this much shorter and more concise! Important references:
  additional work on dynamic radio networks, something general on
  broadcasting in radio networks, short discussion of other papers
  with similar dynamic network models.}
  
\noindent
	Information dissemination (spreading) or message broadcasting is one of the fundamental problems in distributed computing and therefore a rich literature exists on this topic.
	This problem, which is well-studied in static networks (e.g. \cite{bgi1,bgi2,chlamtac85,jurdzinski14} and many more), was recently considered in various dynamic network models.
	Information dissemination is widely studied in $T$-interval connected graph model (see e.g. \cite{kuhn:stoc10,opodis15,obliviousDG,dualgraph}).
	However, it is one of the main problems which is considered in many other dynamic network models like~\cite{fernandezanta12,baumann09,odell05}. 
	
	Similarly to the single-message broadcast problem, multi-message broadcast problem is also a well-studied problem in static networks~\cite{attiya,leighton,lynch,peleg}.

\vspace{2cm}
	Information dissemination (spreading) or message broadcasting is one of the fundamental problems in distributed computing and therefore a rich literature exists on this topic in various models. 
	Here we would like to consider three different problems regarding information dissemination in dynamic networks and mention part of the work has been done so far. 
\\[.1cm]
\noindent
\textit{Single-Message Broadcast.}
	This is a problem of broadcasting a single message from a source node to all nodes of a network. 		This problem is one of the well studied problems in static networks (e.g. \cite{bgi1,bgi2,chlamtac85,jurdzinski14} and many more). 
	 It is moreover studied in various models of dynamic networks, e.g., \cite{}. 
	 Let us consider the $T$-interval connectivity model against $\tau$-oblivious adversary introduced in~\cite{kuhn:stoc10}. For the extreme case of $T=1$, $\tau = 1$, a very simple and optimal broadcast protocol is introduced by Clementi et al.~\cite{clementi09}, where all the informed nodes independently transmit the source message with probability $\ln n/n$ in each round. 
	 For the other extreme case of $T=\infty$, an almost optimal harmonic broadcast protocol is introduced by Kuhn et al.~\cite{dualgraph} to solve the problem against 0-oblivious adversary. 
	 Furthermore, Ghaffari et al.~\cite{obliviousDG} show that single-message broadcast in case of $T=\infty$ can be solved efficiently against $\infty$-oblivious adversary, while the problem cannot be solved  faster than $\Omega(n/\log n)$ against 1-oblivious adversary. 
	 In~\cite{opodis15}, we show the correlation between the complexity of the broadcast problem and the stability and connectivity of dynamic network (value of $T$) and also the adaptiveness of the adversary (value of $\tau$). 
\\[.1cm]
\noindent
\textit{Local Broadcast.} In this problem, some subset of nodes $B\subseteq V$ are initially given a message. 
	 Assume that set $R$ is the set of nodes with at least one neighbor in $B$. 
	 The problem is solved when all the nodes in $R$ receive at least one message from some node in $B$. 
	 In~\cite{obliviousDG}, the authors show that solving the problem in case of $T=\infty$ and $\tau = \infty$ requires at least $\Omega(\sqrt{n}/\log n)$ rounds. 
	 In the same setting, but with some geographic constraint, it is proved that the problem can be solved in $O(\log^2 n\log \Delta)$ rounds, which is within a log-factor of the optimal solution in the static protocol model.
	 Ghaffari et al.~\cite{ghaffari12} compare the hardness of solving this problem in the classic static model and dynamic networks with $\infty$-interval connectivity. 
\\[.1cm]
\noindent
\textit{Multi-message Broadcast.} 
	Multi-message broadcast or $s$-token dissemination problem in static networks is well-studied (see \cite{attiya,leighton,lynch,peleg}), where the problem can be solved in time $O(n+s)$~\cite{topkis85}. 
	However, this problem in dynamic networks is first considered in~\cite{kuhn:stoc10}, where a $O(nk)$ algorithm is proposed for the case that nodes do not know $n$. 
	Moreover, a lower bound of $\Omega(n\log k)$ is given in the mentioned paper which is later improved by Dutta et al.~\cite{dutta} to an almost tight lower bound of $\Omega(nk/\log n+n)$. 
	Haeupler and Kuhn~\cite{haeupler:disc12} extended this lower bound for the cases where nodes are allowed to send $b\geq1$ tokens per round, the graph in each round is $c$-vertex connected, the general case of $T$-interval connectivity, and finally $\delta$ fraction of nodes are only required to receive each token. 
	Sarma et al.~\cite{sarma} proposed a random walk based algorithm for solving token dissemination in dynamic networks. 
	 Abshoff et al.~\cite{abshoff} introduce an almost matching lower and upper bounds for solving $s$-token dissemination in dynamic networks within unit disk graph model, where a worst-case adversary can move the nodes in a Euclidean plane with maximum velocity of at most $v_{max}$. 
	 Clementi et al.~\cite{clementi:podc12} study the flooding time in Markovian dynamic evolving dynamic graphs where the graph changes according to a random process. 
	  }
	

\section{Model and Problem Definition}
\label{sec:model}
\textbf{Dynamic Networks.}  We model dynamic radio networks using the
synchronous dynamic network model of \cite{kuhn:stoc10}.\footnote{We note
  that many other, similar dynamic network models have appeared in the
  literature (cf. Section~\ref{sec:intro}).} A dynamic network is
represented by a fixed set of nodes $V$ of size $n$ and a sequence of
undirected graphs $\langle G_1, G_2, \dots\rangle$, where
$G_i=(V,E_i)$ is the communication graph in round $i$. Hence, while
the set of nodes remains the same throughout an execution, the set of
edges can potentially change from round to round. A dynamic graph
$\langle G_1,G_2,\dots\rangle$ is called \textit{$T$-interval
  connected} for an integer parameter $T\geq 1$ if the graph
\[
\overline{G}_{r,T} = (V,\overline{E}_{r,T}), \quad\text{where }
\overline{E}_{r,T} := \bigcap_{r'=r}^{r+T-1} E_{r'}
\]
is connected for all $r\geq 1$.

\vspace{.1cm}
\noindent
\textbf{Communication Model.}  We define an $n$-node distributed
algorithm $\mathcal{A}$ as a collection of $n$ processes which are
assigned to the nodes of an $n$-node network. Thus, at the beginning
of an execution, a bijection from $V$ to $\mathcal{A}$ is defined by
an adversary. In the following, we will use ``node'' to refer to a
node $u\in V$ and to the process running at node $u$. We assume that
each node has a unique ID of $O(\log n)$ bits. In each round of an
execution, each node $u$ decides to either transmit a message or
listen to the wireless channel. When a node decides to transmit a
message in round $r$, the message reaches all of its neighbors in
$G_r$. A node $u$ in round $r$ successfully receives a message from a
neighbor $v$ if and only if $v$ is the only neighbor of $u$ in $G_r$
transmitting in round $r$. Otherwise, if zero or multiple messages
reach a node $u$, $u$ receives silence, i.e., nodes cannot
detect collisions.\\[.1cm]
\noindent
\textbf{Adversary.}
We assume that the dynamic graph $\langle G_1,G_2, \dots\rangle$ is
determined by an adversary. Classically, in this context, three types
of adversaries have been considered (see, e.g., \cite{obliviousDG}).
An \emph{oblivious} adversary has to determine the whole sequence of
graphs at the beginning of an execution, \emph{independently} of any
randomness used in the algorithm. An \emph{adaptive} adversary
can construct the graph of round $r$ depending on the history and thus
in particular the randomness up to
round $r$. Typically, two different adaptive adversaries are
considered. A \emph{strongly adaptive} adversary can choose graph
$G_r$ dependent on the history up to round $r$ including the
randomness of the algorithm in round $r$, whereas a \emph{weakly
  adaptive} adversary can only use the randomness up to round $r-1$ to
determine $G_r$. In the present paper, we use a more fine-grained
adversary definition which was in this form introduced in
\cite{opodis15}. For an integer $\tau \geq 0$, we call an adversary
\textit{$\tau$-oblivious} if for any $r\geq 1$, the adversary
constructs $G_r$ based on the knowledge of the algorithm description
and the algorithm's random choices of the first $r-\tau$ rounds. Note
that the three classic adversaries described above are all special
cases of a $\tau$-oblivious adversary, where $\tau=\infty$ corresponds
to an oblivious adversary, $\tau=0$ to a strongly adaptive adversary
and $\tau=1$ to a weakly adaptive adversary.\\[.1cm]
\noindent
\textbf{Multi-message Broadcast Problem.}  For some positive integers
$s$ and $B$, we define the $s$-multi-message broadcast problem as
follows.  We assume that there are $s$ distinct \textit{broadcast
  messages} of size $B$ bits, each of which is given to some node in
the network, called a \textit{source node}.  Then the problem requires
the dissemination of these $s$ broadcast messages to all nodes in the
network.  It can be solved by two types of algorithms;
\textit{store-and-forward algorithms} and \textit{network coding
  algorithms}. In store-and-forward algorithms, each broadcast message
is considered as a black box, and nodes can only store and forward
them.  In contrast to store-and-forward algorithms, in network coding
algorithms, each node can send a message which can be any function of
the messages it has received so far.
In the following we define two types of problems and later we show that a procedure that can solve each of these problems can be used as a subroutine for solving the multi-message broadcast problem.\\[.1cm]
\noindent
\textbf{Communication Capacity.}  To solve the multi-message broadcast
problem we consider a restriction on the amount of data which can be
transmitted in a single message of a distributed protocol. The
\textit{communication capacity} $c\geq 1$ is defined as the maximum
number of broadcast messages that can be sent in a single message of a
distributed algorithm. In addition, an algorithm can send some
additonal control information of the same (asymptotic) size, i.e., an
algorithm is allowed to send messages of $O(cB)$ bits.
\\[.1cm]
\noindent
\textbf{Limited Single-Message Broadcast Problem.} 
	Let us assume that there is a single broadcast message initially given to some node in an $n$-node network. 
	For some integer parameter $k\leq n$, $k$-limited broadcast
        problem requires the successful receipt of the broadcast
        message by at least $k$ nodes (including the source) in the network with probability at least $1/2$. \\[.1cm]
\noindent
\textbf{Concurrency-Resistant Single-Message
  Broadcast Problem.}
    In this problem we assume that there can be 0, 1, or more source nodes in the network, each of which is given a broadcast message. 
    If there exists only one source node, then its broadcast message is required to be successfully received by all nodes in the network, and thus this execution is considered a successful broadcast. 
    Otherwise, we have an unsuccessful broadcast, where there are no source nodes or more than one source nodes.
    The problem requires that all nodes detect whether the broadcast was successful or not by the end of the execution. 
    This broadcast with success detection is actually simulating a single-hop communication network with collision detection, where if only one node broadcasts in a round, all nodes receive the message and otherwise all nodes detect collision/silence.


\section{Multi-message Broadcast Algorithms}
\label{sec:algorithm}

In this section we present upper bounds for the multi-message
broadcast problem in different scenarios depending on the
communication
capacity, 
the interval connectivity of the communication network, the adversary
strength, and also the ability to use network coding for disseminating
information. We start by describing \emph{generic techniques} for
broadcasting multiple messages in dynamic radio networks. Further, the discussion of how
to use network coding to solve multi-message broadcast in dynamic
radio networks appears in Section~\ref{sec:apps}.

\subsection{Generic Algorithm for Large Communication Capacity 
\boldmath$c$}
\label{sec:generic1}
We start by describing a generic method for coping with a general
communication capacity parameter $c$ (see Section~\ref{sec:model}). If we
want to design store-and-forward algorithms which exploit the fact
that in a given time slot a node can transmit $c\gg 1$ source
messages to its neighbors, we have to deal with the problem that
initially, all broadcast messages might start at distinct source
nodes. In order to collect sufficiently many broadcast messages at
single nodes and to nevertheless avoid too many redundant
retransmission of the same broadcast message by several nodes, we
adapt a technique introduced by Chrobak et al.\ in
\cite{chrobak:2004}. Their algorithm runs in phases and it is based on
iterative applications of a $k$-limited single-message broadcast
routine for exponentially growing values of $k$.  In each phase, for
each broadcast message \calM, the minimum number of nodes which know
\calM doubles. Typically, the maximum time for reaching at least $k$
nodes with a single-message broadcast algorithm grows linearly in
$k$. If $k$ is doubled in each phase, the time for each $k$-limited
single-message broadcast instance therefore also doubles. However,
since in each phase, the number of nodes which know each source
message \calM at the beginning of a phase also doubles, the number of
source nodes from which we need to start a $k$-limited single-message
broadcast instance gets divided by $2$ and overall, the time
complexity of each phase will be about the same.

\vspace{.1cm}
\noindent
\textbf{Distributed Coupon Collection.} Formally, in
\cite{chrobak:2004} this idea is modeled by a \emph{distributed coupon
  collection} process which we generalize here. The
distributed coupon collection problem is characterized by four
positive integer parameters $n$, $s$, $\ell$, and $c$ (in
\cite{chrobak:2004}, the parameters $s$ and $c$ are both equal to
$n$). There are $n$ bins and $s$ distinct coupons (in the application,
the bins will correspond to the nodes of the communication network and
the distinct coupons will be the broadcast messages). There are at
least $\ell$ copies of each coupon and the at least $s\ell$ coupons
are distributed among the $n$ bins such that each bin contains at most
one copy of each of the $s$ distinct coupons. The coupon collection
proceeds in discrete time steps. In each time step, a bin is chosen
uniformly at random. If a bin is chosen, it is opened with probability
$1/2$. If the bin is opened, at most $c$ coupons of it are collected
as follows. If the bin has at most $c$ coupons, all coupons
of the bin are collected, otherwise a randomly chosen subset of size $c$ of
the coupons in the bin is collected. The coupon collection ends
as soon as each of the $s$ distinct coupons has been collected at
least once. The following lemma upper bounds the total number of time
steps needed to collect all $s$ coupons.

\begin{lemma}[Distributed Coupon Collection]\label{lemma:couponcoll}
  With high probability, the described distributed coupon collection
  process ends after $O\big(\frac{ns}{c\ell}\cdot \log(n+s)\big)$ steps.
\end{lemma}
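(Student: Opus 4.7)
The plan is to analyze each coupon separately and combine via a union bound. Fix an arbitrary coupon $M$ and let $n_M\geq\ell$ denote the number of bins containing $M$; this quantity stays fixed throughout the process since coupons are only marked as collected, never removed.

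The heart of the argument is lower-bounding the per-step probability $p_M$ that $M$ is collected. A specific bin $B\ni M$ is opened in a given step with probability $1/(2n)$, and, conditioned on this, $M$ lies in the uniformly chosen size-$\min(c,|B|)$ collected subset with probability $\min(c,|B|)/|B|$. Since each bin holds at most one copy of each of the $s$ distinct coupons we have $|B|\leq s$, and hence this conditional probability is at least $\min(c,s)/s$: if $|B|\leq c$ the ratio is $1\geq \min(c,s)/s$, otherwise it equals $c/|B|\geq c/s$. Summing over the $n_M\geq\ell$ bins containing $M$,
\[
p_M \;\geq\; \frac{n_M}{2n}\cdot \frac{\min(c,s)}{s} \;\geq\; \frac{\ell c}{2ns}
\]
in the meaningful regime $c\leq s$ (for $c>s$ the same argument already yields $p_M\geq \ell/(2n)$, which is even stronger).

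Since the per-step coin flips are mutually independent, the probability that $M$ is still uncollected after $T$ steps is at most $(1-p_M)^T \leq e^{-T p_M}$. Choosing $T=\Theta(ns\log(n+s)/(c\ell))$ with a sufficiently large implicit constant drives this bound below $(n+s)^{-C}$ for any prescribed constant $C$; a union bound over the $s$ distinct coupons then yields failure probability at most $s\cdot(n+s)^{-C}\leq (n+s)^{-(C-1)}$, which is high probability in $n+s$. The only nontrivial ingredient is the lower bound on $\min(c,|B|)/|B|$, which crucially exploits the constraint $|B|\leq s$; otherwise an overstuffed bin could hide $M$ among its coupons with overwhelming probability. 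Everything else reduces to a routine exponential tail bound combined with a union bound.
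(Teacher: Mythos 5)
Your proof is correct and follows essentially the same route as the paper's: lower-bound the per-step collection probability of a fixed coupon by $\Omega\big(\tfrac{c\ell}{ns}\big)$ (using that a coupon-containing bin is opened with probability at least $\ell/(2n)$ and that a bin holds at most $s$ coupons, so the chosen coupon survives the size-$c$ subsampling with probability at least $c/s$), then apply an exponential tail bound and a union bound over the $s$ coupons. Your write-up is in fact slightly more careful than the paper's, which drops the factor $1/2$ from the bin-opening probability and does not separate the cases $|B|\leq c$ and $c>s$; none of this affects the asymptotic statement.
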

\begin{proof}
	Consider a particular coupon $x$ and one step of the distributed 
coupon collection process. 
	Since there are at least $\ell$ copies of $x$, each in a different bin, there are at least $\ell$ bins containing $x$. 
	Thus, by picking a random bin, we pick a bin having $x$ with probability at least $\ell/n$. 
	Moreover, as there are $s$ distinct coupons and each bin can hold at most one copy of each coupon, when collecting a random subset of $c$ of the coupons in a
  bin containing $x$, the probability for collecting $x$ is at least
  $c/s$.
  
 	Overall, in each step of the process, the probability for collecting $x$ is therefore at least $\frac{c\ell}{ns}$. 
	To collect $x$ with probability at least $1-\delta$ for some $\delta>0$, we therefore need at most $\frac{ns}{c\ell}\cdot\ln(1/\delta)$ steps. 
	The lemma then follows by setting $\delta=\frac{1}{s n^\gamma}$ for a constant $\gamma>0$ and by applying a union bound over all the $s$ distinct coupons.
	
\end{proof}

\vspace{.1cm}
\noindent
\textbf{Coupon-Collection-Based Generic Multi-message Broadcast
  Algorithm.}
We now discuss how to use the above abstract distributed coupon
collection process to efficiently broadcast multiple messages in a
dynamic radio network. Our algorithm is a generalization of the idea
of Chrobak et al.~\cite{chrobak:2004}. The algorithm consists of
$\lceil\log_2 n\rceil$ phases. By the end of phase $i$, for each
broadcast message $\calM$, we want to
guarantee that at least $\ell_i := 2^i$ nodes know $\calM$. We can
therefore assume that at the beginning of a phase, each source
message $\calM$ is known by at least $\ell_{i-1}=2^{i-1}$ nodes (note
that this is trivially true for $i=1$). We can achieve that each
broadcast message is known by $\ell_i$ nodes by running sufficiently
many instances of $\ell_i$-limited broadcast such that each source
message $\calM$ is disseminated in $O(\log n)$ of these
$\ell_i$-limited broadcast instances. The details of the algorithm are
given by Algorithm~\ref{alg:CCLalgorithm}. In the pseudocode of
Algorithm~\ref{alg:CCLalgorithm}, $\alpha>0$ is a constant which is chosen
sufficiently large.
 
\begin{algorithm}[ht]
\caption{Generic Multi-message Broadcast Algorithm Based on
  Distributed Coupon Collection.}
\label{alg:CCLalgorithm}
\begin{algorithmic}[1]
  \State{\textbf{for each} $v\in V$ \textbf{do} $S_v\gets$ set of
    broadcast messages known by $v$}
  \For{$i\gets1$ \textbf{to} $\lceil\log_2 n\rceil$}
  \State $\ell_i\gets 2^i$
  \For{$j\gets1$ \textbf{to} $\alpha\cdot \frac{ns}{c\ell_i}\cdot \ln n$}
  \For {\textbf{each} $v\in V$} 
  \State (independently) mark $v$ with probability $1/n$
  \If{$v$ is marked}
  \State $R_v\gets$ random subset of $S_v$ of size $\min\set{|S_v|,
    c}$
  \State $v$ initiates $\ell_i$-limited broadcast with message $R_v$
  \EndIf
  \EndFor
  \EndFor
  \EndFor
\end{algorithmic}
\end{algorithm}

The following lemma shows that if the cost of $k$-limited broadcast
depends at most linearly on $k$, Algorithm~\ref{alg:CCLalgorithm} achieves
$s$-multi-message broadcast in essentially the time needed to perform
$s/c$ single-message broadcasts (i.e., $n$-limited broadcasts).

\begin{lemma}\label{lem:CCLalgorithm}
  Let $t(k,n)$ be the time needed to run one instance of $k$-limited
  broadcast in an $n$-node network. Then, when using a sufficiently
  large constant $\alpha>0$ in Algorithm~\ref{alg:CCLalgorithm}, the algorithm
  w.h.p.\ solves $s$-multi-message broadcast in time
  \[
  O\left(
  \sum_{i=1}^{\lceil\log_2 n\rceil} \frac{ns}{c2^i}\cdot 
  t\big(2^i,n)\cdot\log n
  \right).
  \]
\end{lemma}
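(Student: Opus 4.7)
The plan is to proceed by induction on the phase index $i$, maintaining the invariant that at the end of phase $i$ every broadcast message is known to at least $\ell_i = 2^i$ nodes, w.h.p. The base case $i=0$ is trivial, since each source already holds its own message. Given the invariant at the start of phase $i$, I would fix an arbitrary broadcast message $\calM$ and argue that within the $\alpha\cdot\frac{ns}{c\ell_i}\ln n$ iterations of phase $i$, at least one iteration succeeds in broadcasting $\calM$ to $\ell_i$ nodes with probability $1-n^{-\Omega(\alpha)}$; a union bound over the $s$ messages and the $\lceil\log_2 n\rceil$ phases then gives the invariant for all messages and all phases simultaneously.

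For a fixed $\calM$ and a fixed iteration of phase $i$, call the iteration \emph{successful for $\calM$} if (a) exactly one node $v$ is marked, (b) $\calM\in S_v$, (c) $\calM\in R_v$, and (d) the $\ell_i$-limited broadcast initiated by $v$ delivers $R_v$ to at least $\ell_i$ nodes. Event (a) occurs with probability $(1-1/n)^{n-1}\geq 1/e$, and conditioning on it makes $v$ uniformly distributed over $V$; by the inductive hypothesis at least $\ell_{i-1}=\ell_i/2$ nodes hold $\calM$, so (b) holds with conditional probability at least $\ell_{i-1}/n$; since $|S_v|\leq s$, event (c) holds with conditional probability at least $c/s$, and is automatic when $|S_v|\leq c$; finally, (d) holds with probability at least $1/2$ by the definition of the limited-broadcast primitive. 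Multiplying these bounds, each iteration is successful for $\calM$ with probability $\Omega(c\ell_i/(ns))$, and the success indicators stochastically dominate i.i.d.\ Bernoulli variables with this probability. A single successful iteration already pushes the number of nodes knowing $\calM$ up to $\ell_i$, so the elementary bound $1-(1-p)^N\geq 1-e^{-Np}$ applied with $N=\alpha\cdot\frac{ns}{c\ell_i}\ln n$ and $p=\Omega(c\ell_i/(ns))$ yields at least one successful iteration with probability $1-n^{-\Omega(\alpha)}$ for sufficiently large $\alpha$.

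The running time bound is then immediate from the algorithm's structure: each iteration of phase $i$ costs $t(2^i,n)$ rounds for its single $\ell_i$-limited broadcast, and there are $O\bigl(\frac{ns}{c\cdot 2^i}\log n\bigr)$ iterations per phase, so summing over $i$ gives the claimed expression. The main subtlety I anticipate is handling iterations in which two or more nodes happen to mark themselves concurrently: the corresponding $\ell_i$-limited broadcasts may collide and contribute no progress. This is absorbed into the analysis by restricting attention to iterations in which exactly one node is marked, which already happens with constant probability and therefore only affects the hidden constant $\alpha$ required for the concentration step.
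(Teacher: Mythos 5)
Your proof is correct and follows essentially the same route as the paper's: induction on the phase maintaining the invariant that $\ell_{i-1}$ nodes know each message, a per-iteration success probability of $\Theta(c\ell_{i-1}/(ns))$ obtained from the unique-marking, packing, and limited-broadcast events, and $\alpha\frac{ns}{c\ell_i}\ln n$ repetitions to amplify to $1-n^{-\Omega(\alpha)}$. Your treatment is if anything slightly more explicit than the paper's (the union bound over messages and phases, and the stochastic-domination remark for dependent iterations), but the argument is the same.
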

\begin{proof}
  Recall that if only one node in the network
  initiates the given $k$-limited broadcast routine with broadcast
  message \calM, then with probability at least $1/2$, at least $k$
  nodes receive \calM successfully.  Fix some broadcast message
  $\zeta$ as one of the $s$ broadcast messages initially given to
  source nodes. We show by induction that, w.h.p., at the beginning of
  the $i^{\mathit{th}}$ iteration (and thus at the end of the
  $(i-1)^{\mathit{st}}$ iteration) of the outer for loop, there are at
  least $\ell_{i-1}$ nodes which know $\zeta$. Clearly, this is true
  for $i=1$. For $i\geq 1$, consider the $i^{\mathit{th}}$ iteration of the
  outer for loop of the algorithm.  Assuming that there exist at least
  $\ell_{i-1}$ nodes having $\zeta$, any node which has $\zeta$
  chooses $\zeta$ to pack it in its sent message with probability at
  least $c/s$ in each iteration of the inner for loop.  Furthermore,
  each node is the only marked node in the network in each iteration
  of the inner for loop with probability at least
  $\frac{1}{n}\big(1-\frac{1}{n}\big)^{n-1}\geq\frac{1}{en}$.
  Therefore, $\zeta$ will be packed into the sent message of a node which is
  the only marked node in that round with probability at least
  $\frac{c\ell_{i-1}}{ens}$ and in this case, it will reach at least
  $\ell_i$ nodes with probability at least $1/2$.  Thus, $\zeta$ will
  be known by at least $\ell_i$ nodes after
  $\frac{\alpha ns}{c\ell_i}\ln n = \frac{\alpha ns}{2c\ell_{i-1}}\ln
  n$
  repetitions of the inner for loop with probability at least
  $1-n^{-\alpha /4e}$.
  
\end{proof}

Typically, the time to run one instance of $k$-limited broadcast
depends linearly on $k$. The following corollary simplifies the
statement of Lemma~\ref{lem:CCLalgorithm} for this particular case.

\begin{corollary}\label{cor:CCLalgorithm}
  Assume that the time for running one instance of $k$-limited
  broadcast in an $n$-node network is given by $t(k,n)\leq k\cdot
  t(n)$. Then, when using a sufficiently
  large constant $\alpha>0$ in Algorithm~\ref{alg:CCLalgorithm}, the algorithm
  w.h.p.\ solves $s$-multi-message broadcast in time
  \[
  O\left(t(n) \cdot \frac{ns}{c}\cdot \log^2n\right).
  \]
\end{corollary}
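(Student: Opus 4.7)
The plan is to derive the corollary as a direct specialization of Lemma~\ref{lem:CCLalgorithm} by substituting the assumed upper bound $t(k,n)\leq k\cdot t(n)$ into the summation appearing in the lemma's time complexity. No new algorithmic idea is needed; the work is entirely in collapsing the geometric sum.

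Concretely, I would start from the expression
\[
O\left(\sum_{i=1}^{\lceil\log_2 n\rceil} \frac{ns}{c\,2^i}\cdot t(2^i,n)\cdot \log n\right)
\]
given by Lemma~\ref{lem:CCLalgorithm}. Using the hypothesis, each term $t(2^i,n)$ is upper bounded by $2^i\cdot t(n)$, so the factors $2^i$ in the numerator and denominator cancel and each summand reduces to $\frac{ns}{c}\cdot t(n)\cdot \log n$. Since the summand is now independent of $i$ and there are $\lceil\log_2 n\rceil = O(\log n)$ indices, the total is $O\bigl(t(n)\cdot \frac{ns}{c}\cdot \log^2 n\bigr)$, which is the claimed bound.

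There is no real obstacle here: the corollary is essentially a bookkeeping statement that captures the ``typical'' case when the running time of $k$-limited broadcast scales linearly in $k$. The only subtle point worth flagging explicitly in the write-up is that the linearity assumption is what makes the $\log n$ levels of the outer loop each contribute the same asymptotic cost, yielding the extra $\log n$ factor (compared with the $\Theta(\frac{ns}{c}\cdot t(n))$ cost of a single $n$-limited broadcast run on each group of $c$ messages). Since the high-probability correctness of Algorithm~\ref{alg:CCLalgorithm} has already been established in Lemma~\ref{lem:CCLalgorithm}, no further probabilistic argument is required.
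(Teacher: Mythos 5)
Your proposal is correct and matches the paper's approach: the paper also derives the corollary directly from Lemma~\ref{lem:CCLalgorithm} by substituting $t(2^i,n)\leq 2^i\cdot t(n)$, cancelling the $2^i$ factors, and summing the resulting constant term over the $O(\log n)$ iterations of the outer loop. Your write-up simply makes explicit the bookkeeping that the paper leaves implicit.
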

\begin{proof}
  Follows directly from Lemma~\ref{lem:CCLalgorithm}.
\end{proof}

\subsection{Generic Algorithm for Constant Communication Capacity \boldmath$c$}
\label{sec:generic2}
We next describe a more efficient generic algorithm for the case $c=1$
(or any constant $c$). Hence, we assume that each message sent by the
algorithm can only contain a single broadcast message.  In this
case, we do not need to care about collecting different broadcast
messages at a single node and Algorithm~\ref{alg:CCLalgorithm} is therefore too
costly.  Instead, we use an algorithm which is based on a more standard
single-message broadcast algorithm.  In the following, we assume that
for some given setting we are given a \emph{concurrency-resistant}
single-message broadcast algorithm $\mathcal{B}$.  Recall that if there exists
only one broadcast message while running $\mathcal{B}$, all nodes
receive the broadcast message and $\mathcal{B}$ returns 1, otherwise
it returns 0.  We use this algorithm as a subroutine in designing
the generic multi-message broadcast algorithm. We assume that
initially, the number of broadcast messages $s$ is known. The
generic algorithm runs in phases and in each phase we run one instance
of $\mathcal{B}$. Note that if the instance returns $1$, all nodes
know the broadcast message which has been disseminated to all
nodes. Therefore at all times, all nodes know how many messages still
need to be broadcast. If at the beginning of a phase, there are
$x\leq s$ broadcast messages which still need to be broadcast, for
each broadcast message $\mathcal{M}$, the source node of $\mathcal{M}$ decides to start an instance of $\mathcal{B}$
with broadcast message $\mathcal{M}$ with probability $1/x$.

\begin{algorithm}[ht]
\caption{Generic Multi-Message Broadcast Algorithm Based on
  Concurrency-Resistant Single-Message Broadcast Algorithm.}
\label{alg:Galgorithm}
\begin{algorithmic}[1]
  \State{\textbf{for each} $v\in V$ \textbf{do} $S_v\gets$ set of
    broadcast messages given initially to $v$}
  \State{$x \gets s$}  
  \While{$x\neq 0$}
  \State{\textbf{for each} $v\in V$ \textbf{do}}
  \State{\hspace{.4cm} \textbf{if} $S_v\neq \emptyset$ \textbf{then}}
  \State{\hspace{.7cm} (independently) mark each of the broadcast messages in $S_v$ with probability $1/x$}
  \State{\hspace{.4cm} \textbf{if} $v$ has a marked broadcast message \textbf{then}}
  \State{\hspace{.7cm} $v$ initiates $\mathcal{B}$ with its marked broadcast messages}
  \State{\hspace{.4cm} \textbf{if} $\mathcal{B}$ returns 1 \textbf{then}}
  \State{\hspace{.7cm} $x\gets x-1$}
  \State{\hspace{.7cm} remove the broadcast message of node $u$ which
    is delivered to all nodes from $S_u$}
  \State{\hspace{.4cm} unmark all broadcast messages}

  \EndWhile
\end{algorithmic}
\end{algorithm}

	
\begin{lemma}\label{lem:Galgorithm}
  If $\mathcal{B}$ is a concurrency-resistant single-message broadcast
  algorithm with running time $t(n)$, then
  Algorithm~\ref{alg:Galgorithm} solves the $s$-multi-message broadcast problem
  in time $O\big((s+\log n)t(n)\big)$.
\end{lemma}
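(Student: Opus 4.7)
The plan is to show that each phase of Algorithm~\ref{alg:Galgorithm} makes progress (reduces $x$ by one) with constant probability, and then to conclude via a Chernoff bound. Each phase consists of one invocation of $\mathcal{B}$, so has duration $t(n)$, and the random choices in different phases are independent.

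The first step is to analyze a single phase with $x$ remaining broadcast messages. Since each such message has a unique source $v\in V$ (with $M\in S_v$), and since $\sum_v |S_v| = x$, the marking step independently marks each of the $x$ undelivered messages with probability $1/x$. I would then lower bound the probability that exactly one message in the whole network is marked: this is
\[
x \cdot \frac{1}{x} \cdot \left(1-\frac{1}{x}\right)^{x-1} \;\geq\; \frac{1}{e}.
\]
Conditioned on this event, exactly one source initiates $\mathcal{B}$, so by the concurrency-resistant guarantee $\mathcal{B}$ delivers this message to every node and returns $1$, after which the source removes it from $S_u$ and $x$ decreases by $1$. I would call such a phase \emph{successful}.

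The second step is a standard concentration argument. Independently across phases, each phase is successful with probability at least $1/e$, regardless of the state at its start. Hence the number of successes in $N$ phases stochastically dominates a $\mathrm{Binomial}(N, 1/e)$ random variable, and by a Chernoff bound there exists a constant $c>0$ such that after $N = c(s+\log n)$ phases, the number of successes is at least $s$ with probability $1-n^{-\Omega(1)}$. Once $s$ successes occur we have $x=0$ and the algorithm terminates, yielding a total running time of $O\big((s+\log n)\,t(n)\big)$.

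There is no serious obstacle in this proof; the only subtle point is to justify the per-phase success probability cleanly. One must note that the lower bound $1/e$ on the probability of exactly one marked message depends only on the fresh coin flips made inside the phase and on the current value of $x$ (which is common knowledge since each invocation of $\mathcal{B}$ returns $1$ to all nodes on success), and so it holds regardless of the adversary's choice of the dynamic graph, the history, or the distribution of messages across sources.
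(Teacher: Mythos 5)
Your proof is correct and follows essentially the same route as the paper's: define a phase as successful when exactly one undelivered message is marked, bound that probability below by $1/e$, and apply a concentration bound over $O(s+\log n)$ phases. If anything, your write-up is slightly more careful than the paper's (which loosely states that $c'\log n$ phases suffice rather than $c'(s+\log n)$), and your remark that the per-phase bound holds independently of the history and the adversary is a useful clarification.
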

\begin{proof}
  In each phase for some $1\leq x\leq s$, there are $x$ broadcast
  messages left to be received by all.  Each of these broadcast
  messages is marked to be broadcasted with probability $1/x$.
  Therefore, the probability of each broadcast message to be the only
  marked broadcast message in the phase is $p$, given by
  \[
  p = \frac{1}{x}\left(1-\frac{1}{x}\right)^{(x-1)} \geq \frac{1}{ex}.
  \]
  Let us call a phase which has exactly one marked broadcast message a \textit{successful} phase. 
  Therefore, the probability that a phase with $x$ broadcast messages is a successful phase is at least $1/e$. 
  Hence if the number of phases is at least $c'\log n$ for some large constant $c'$, we have $s$ successful phases with high probability, which concludes the proof.  
\end{proof}

In the sequel, we apply this generic broadcast algorithm to solve the
multi-message broadcast problem in three different settings.

\subsection{Application of the Generic Algorithms in Different
  Settings}

In this section we intend to show how to apply the generic
multi-message broadcast techniques introduced in Section~\ref{sec:generic1}
and Section~\ref{sec:generic2} in different settings.  In the following we
consider the problem in networks with different interval connectivity
$T$. In each of these settings we investigate the two cases of large
and constant communication capacities separately.  For the first case
the generic algorithm requires the existence of a $k$-limited
single-message broadcast algorithm and in the second case the generic
algorithm requires the existence of a concurrency-resistant
single-message broadcast algorithm. We therefore need to show that the
existing single-message broadcast algorithms in the considered dynamic
radio network settings can be turned into $k$-limited and
concurrency-resistant variants of these algorithms.

\subsubsection{(Setting I) \boldmath$\infty$-Interval Connected Dynamic Networks}

We consider the $s$-multi-message broadcast problem in an
$\infty$-interval connected dynamic network against a $0$-oblivious
adversary. To obtain a $k$-limited and a concurrency-resistant
single-message broadcast algorithm, we adapt the harmonic broadcast
algorithm introduced in \cite{dualgraph}. In the harmonic broadcast
algorithm, in the first round, the source node transmits its broadcast
message to its neighbors.  From the second round on, any node which
receives the broadcast message in round $r_v$, transmits the message in any round
$r>r_v$ with probability $p_v(r)$, given by
\[
		p_v(r) := \frac{1}{1+\lfloor \frac{r-r_v-1}{\mathcal{T}}   \rfloor},
\]
where $\mathcal{T}=\Theta(\log n)$ is a parameter. That is, for the
first $\mathcal{T}$ rounds immediately after some node receives the
message, it transmits the message with probability 1, for the next
$\mathcal{T}$ rounds it transmits with probability $1/2$, then with
probability $1/3$ and so on. 

\vspace*{.1cm}
\noindent
\textit{Procedure Harmonic  $k$-Limited Broadcast.}\\
	By adapting the harmonic broadcast algorithm in~\cite{dualgraph}  we limit the number of nodes that receive the message successfully to $k$. 
	For this purpose we only run the harmonic broadcast algorithm for $4k\mathcal{T}(\ln (n)+1)$ rounds, where $\mathcal{T} = \lceil 12\ln (n/\epsilon) \rceil$ and $\epsilon$ is a small value which will be fixed later.  
	
	We start with some definitions. For any $r\geq 1$ we define
        $P(r)$ as the sum of transmitting probabilities of all nodes
        in round $r$. Any round $r$ with $P(r)<1$ is called
        \textit{free} and any other round is called a \textit{busy} round. If some node $v$ is the only node that is transmitting in some round $r$, then we say that node $v$ is \textit{isolated} in round $r$. 
\begin{lemma}(Lemma 13 in~\cite{dualgraph})
\label{lem:kuhn}
	Consider a node $v$ and let $r_v$ be the round when $v$ first
        receives the broadcast message. 
	Further, let $r>r_v$ be such that at least half of the rounds $r_v+1, \dots , r$ are free. 
	If $\mathcal{T}\geq 12\ln (n/\epsilon)$ for some $\epsilon>0$, then with probability larger than $1-\epsilon/n$ there exists a round $r'\in[r_v+1,r]$ such that $v$ is isolated in round $r'$. 
\end{lemma}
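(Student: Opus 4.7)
The plan is to lower-bound, for each free round, the probability that $v$ is the sole transmitter in that round, and then combine these bounds across the at least $(r-r_v)/2$ free rounds in $[r_v+1,r]$ using independence of the transmission decisions across rounds.

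First, I observe that in any free round $r'$, since $P(r')<1$ and each node's transmit probability lies in $\{1,1/2,1/3,\dots\}$, no single node can have $p_u(r')=1$ (otherwise its contribution alone would push $P(r')$ to at least $1$). Hence $p_u(r')\leq 1/2$ for every $u$. Using the elementary inequality $1-x\geq 4^{-x}$, valid for $x\in[0,1/2]$,
\[
\prod_{u\neq v}\bigl(1-p_u(r')\bigr) \;\geq\; 4^{-\sum_{u\neq v}p_u(r')} \;>\; \tfrac{1}{4},
\]
so the probability that $v$ is isolated in $r'$ is at least $p_v(r')/4$. Since randomness across rounds is independent, letting $F\subseteq[r_v+1,r]$ denote the set of free rounds,
\[
\Pr[\,v\text{ is never isolated in }F\,] \;\leq\; \prod_{r'\in F}\!\bigl(1-\tfrac{p_v(r')}{4}\bigr) \;\leq\; \exp\!\bigl(-\tfrac{1}{4}S\bigr), \qquad S := \sum_{r'\in F}p_v(r').
\]

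The main step is to bound $S$ from below by $\Omega(\mathcal{T})$. Partition $[r_v+1,r]$ into consecutive bursts of $\mathcal{T}$ rounds; in burst $i\geq 0$ one has $p_v=1/(i+1)$. The $0$-th burst contributes nothing to $F$ (the first paragraph rules out free rounds with $p_v=1$), so all free rounds lie in bursts of index $\geq 1$. Combined with $|F|\geq(r-r_v)/2$, this forces $r-r_v\geq 2\mathcal{T}$. The worst case for $S$ is the adversary placing the free rounds in the latest bursts, where $p_v$ is smallest. Summing the resulting harmonic tail over roughly the last half of the bursts gives $S\geq \mathcal{T}\ln 2\cdot(1-o(1))$, and more conservatively one obtains $S\geq \mathcal{T}/3$ uniformly in all parameter regimes (including the edge case of few bursts or a partial final burst).

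Plugging in $\mathcal{T}\geq 12\ln(n/\epsilon)$ yields
\[
\Pr[\,v\text{ is never isolated in }F\,] \;\leq\; \exp(-\mathcal{T}/12) \;\leq\; \epsilon/n,
\]
as desired. The main technical obstacle is the lower bound on $S$: the adversary may concentrate free rounds into the tail bursts where $p_v$ is smallest, so one has to verify that the resulting harmonic tail-sum remains a constant fraction of $\mathcal{T}$ even under this worst-case placement. Once that is in hand, the rest is a routine application of independence and the standard $1-x\leq e^{-x}$ bound.
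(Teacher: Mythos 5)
Your proof is correct and, in fact, reconstructs essentially the argument behind Lemma~13 of the cited dual-graph paper, which this paper imports without proof: the three steps (in a free round every transmission probability is at most $1/2$, so $v$ is isolated with probability at least $p_v(r')/4$; the free rounds avoid the first $\mathcal{T}$-round burst and hence the harmonic tail-sum satisfies $S\geq \mathcal{T}/2\geq\mathcal{T}/3$, with the worst case at $r-r_v=2\mathcal{T}$; then $e^{-S/4}\leq e^{-\mathcal{T}/12}\leq\epsilon/n$) all check out. The only informality, shared with the source, is that the set of free rounds is itself random and correlated with the coin flips, so the product bound over $F$ strictly speaking requires conditioning on the wake-up pattern (or a round-by-round supermartingale argument) rather than bare ``independence across rounds.''
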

\begin{lemma}\label{lem:lim_bc}
  Let us assume that a source node, by transmitting message \calM,
  initiates the Harmonic $k$-Limited Broadcast procedure in an
  $\infty$-interval connected dynamic network controlled by a
  0-oblivious adversary. Then with probability larger than
  $1-\epsilon$, by the end of the procedure at least $\min\set{k,n}$ nodes
  receive $\mathcal{M}$ successfully.
\end{lemma}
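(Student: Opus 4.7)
My plan is to prove the lemma by contradiction. Suppose that at the end of the $T=4k\mathcal{T}(\ln n+1)$ rounds, the set $A_T$ of nodes that have received $\mathcal{M}$ satisfies $|A_T| < \min\{k,n\}$, so in particular $|A_T|<k$ and $|A_T|<n$. The proof combines a global bound on busy rounds with a frontier argument exploiting $\infty$-interval connectivity, together with an inductive use of Lemma~\ref{lem:kuhn}.

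First I would bound the total transmission mass. For each informed node $v$, the sum $\sum_{r=r_v+1}^T p_v(r)$ telescopes over $\mathcal{T}$-round blocks to $\mathcal{T}\cdot H_{\lceil(T-r_v)/\mathcal{T}\rceil}\leq \mathcal{T}\bigl(1+\ln(4k(\ln n+1))\bigr)$. Summing over the at most $k$ informed nodes and using $k\leq n$ yields $\sum_{r} P(r)\leq T/2$, so since $P(r)\geq 1$ on every busy round, at most $T/2$ rounds are busy. Next I would use $\infty$-interval connectivity: $G_\infty:=\bigcap_r G_r$ is connected, so since $A_T\subsetneq V$ there exists a $G_\infty$-edge $(u,w)$ with $u\in A_T$ and $w\notin A_T$. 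Because $w$ is a $G_r$-neighbor of $u$ in \emph{every} round, any isolation of $u$ in some round $r'>r_u$ would deliver $\mathcal{M}$ to $w$, contradicting $w\notin A_T$.

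The heart of the argument is to guarantee such an isolation event via Lemma~\ref{lem:kuhn}. I would induct on the reception order: let $R_j$ denote the round at which the $j$-th new node is informed, and show that $R_j\leq T_j:=4j\mathcal{T}(\ln n+1)$ with probability $>1-j\epsilon/n$, for $j\leq\min\{k,n\}$. In the inductive step I pick a $G_\infty$-frontier node $u$ of $A_{R_j}$ (which exists whenever $j<n$) and apply Lemma~\ref{lem:kuhn} to $u$ with endpoint $r=T_{j+1}$; the window $[r_u+1,T_{j+1}]$ has length at least $4\mathcal{T}(\ln n+1)$, which is large enough for Lemma~\ref{lem:kuhn} to produce an isolation round $r'$ with probability $>1-\epsilon/n$. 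A union bound over $j\leq\min\{k,n\}$ caps the total failure probability at $\epsilon$, so $|A_T|\geq\min\{k,n\}$, giving the desired contradiction.

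The main obstacle is verifying the ``at least half of $[r_u+1,T_{j+1}]$ free'' precondition of Lemma~\ref{lem:kuhn} in the inductive step: $r_u$ may be as late as $T_j$, and the previously informed nodes still contribute busy mass to the current phase via their harmonically-decaying transmission probabilities. A delicate phase-wise accounting is needed -- a node informed at time $r_w<r_u$ contributes roughly $\mathcal{T}\ln\bigl(1+(T_{j+1}-r_u)/(r_u-r_w)\bigr)$ busy mass in the window $[r_u+1,T_{j+1}]$, and the sum over all prior informed nodes (using the inductive bounds on their reception times) must stay below $(T_{j+1}-r_u)/2$. This harmonic bookkeeping, together with the choice $\mathcal{T}=\lceil 12\ln(n/\epsilon)\rceil$, is the key calculation that has to be carried out carefully.
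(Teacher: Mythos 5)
Your overall skeleton matches the paper's: argue by contradiction, bound the total transmission mass of the at most $k'<\min\set{k,n}$ informed nodes by $(t-1)/2$ so that at most half of all rounds are busy, use the stable backbone of the $\infty$-interval connected network to exhibit a frontier edge $(u,w)$ with $u$ informed and $w$ not, and invoke Lemma~\ref{lem:kuhn} iteratively to force an isolation of $u$ that would inform $w$. However, the step you yourself flag as the ``main obstacle'' is a genuine gap, and it cannot be closed with the fixed schedule $T_j := 4j\mathcal{T}(\ln n+1)$ that your induction relies on. Lemma~\ref{lem:kuhn} needs at least half of the rounds in $[r_u+1,T_{j+1}]$ to be free, and the global bound ``at most $T/2$ busy rounds in $[0,T]$'' says nothing about how busy rounds are distributed across your fixed-length windows. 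Concretely, a single isolation event can inform many nodes simultaneously; if $\Theta(j)$ nodes are all informed just before $T_j$, each transmits with probability $1$ for the next $\mathcal{T}$ rounds, then $1/2$, etc., so $P(r)\geq 1$ for the next $\Theta(j\mathcal{T})$ rounds. Your next window has length only $4\mathcal{T}(\ln n+1)+O(\delta)$, so for $j=\omega(\log n)$ it can be entirely busy, the precondition of Lemma~\ref{lem:kuhn} fails, and the inductive step $R_{j+1}\leq T_{j+1}$ breaks. The ``harmonic bookkeeping'' you propose only works if the informing times are spread out (e.g., one per window), which is exactly what you are trying to prove.

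The paper avoids this by making the windows adaptive rather than fixed: it defines balance points $\theta_0:=0$ and $\theta_i$ as the first time after $\theta_{i-1}$ at which the numbers of free and busy rounds in $[\theta_{i-1},\theta_i]$ are equal. A burst of busy rounds simply pushes $\theta_i$ further out, and the minimality of $\theta_i$ (together with the fact that a freshly informed node makes its next round busy) is what guarantees the half-free precondition of Lemma~\ref{lem:kuhn} on each node's relevant window (Claim 1). The induction then shows that the $k_i$-th informed node is informed exactly at $\theta_i$ and that $\theta_i+1$ is busy (Claim 2), and the contradiction is obtained globally: $\theta_m+1$ busy with $\theta_m$ the last balance point in $[0,t]$ forces more busy than free rounds overall, contradicting the mass bound. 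To repair your proof you would need to replace your fixed schedule with such an adaptive decomposition (or an equivalent device); as written, the argument does not go through.
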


\begin{proof}
  We prove this lemma based on the analysis in~\cite{opodis15}.  For
  the sake of contradiction let us assume that only $k'<\min\set{k,n}$ nodes
  receive $\mathcal{M}$ by time
  $t := 4k\mathcal{T}\left(\ln (n)+1\right)$.  Therefore, the sum of
  all transmitting probabilities of these $k'$ nodes until time $t$ is
  at most
\[
		k' \mathcal{T} \sum\limits_{i=1}^{\lceil\frac{t}{\mathcal{T}}\rceil}  \frac{1}{i}
		\leq 
		k \mathcal{T} \left(\ln \left(\left\lceil\frac{t}{\mathcal{T}}\right\rceil \right) + 1\right)
		\leq 
	\frac{t-1}{2}.
\]
We can conclude that the number of free rounds is more than the number
of busy rounds until time $t$.  Since in the first round one source
node initiates the execution by transmitting with probability 1, the
first round is busy.  We define $\theta_0:=0$, and for $i>0$,
$\theta_i$ is the first time after time $\theta_{i-1}$ that the number
of free and busy rounds in the time interval
$[\theta_{i-1}, \theta_{i}]$ are equal.  Moreover, let $m$ be a
positive integer such that $\theta_m$ is the last such time in the
time interval $[0,t]$.  For each $i\in\set{1,\dots,m}$, let us assume
that the number of nodes that get informed until time $\theta_i$ is
$k_i$ (note that $k_i\leq k'$ for all $i\leq m$). In order to complete
the proof of the lemma, we next proof three helper claims.

\smallskip

\noindent
\begin{center}
  \begin{minipage}{0.96\linewidth}
    \textbf{Claim 1.} \label{claim31} If a node $v$ gets informed in round
    $t_v$, where $\theta_{i-1} \leq t_v < \theta_i$, and round
    $\theta_{i-1}+1$ is busy, then $v$ gets isolated by time $\theta_i$
    with high probability.
    \begin{proof}[Proof of Claim 1]
      Let $\hat{t}$ be the first time after $t_v$ such that the number of
      free and busy rounds in $[t_v,\hat{t}]$ are equal. We show that if
      round $\theta_{i-1}+1$ is busy, it holds that
      $\hat{t}\leq \theta_i$. The claim that we need to proof then
      directly follows from Lemma~\ref{lem:kuhn}. For the sake of
      contradiction, let us assume that $\hat{t}>\theta_i$. Note that node
      $v$ transmits with probability $1$ in round $t_v+1$ and therefore
      round $t_v+1$ is busy. Assuming that $\hat{t}>\theta_i$ therefore
      implies that the number of free rounds is less than the number of
      busy rounds in $[t_v,\theta_i]$. However, from the minimality of
      $\theta_{i}$ and the assumption that round $\theta_{i-1}+1$ is busy,
      we can also conclude that the number of free rounds is less than the
      number of busy rounds in $[\theta_{i-1},t_v]$.  Hence, in the whole
      interval $[\theta_{i-1},\theta_i]$, there are more busy than free
      rounds, which contradicts the definition of $\theta_i$ and thus our
      assumption that $\hat{t}>\theta_i$, which concludes the proof of the
      claim.
    \end{proof}
  \end{minipage}
\end{center}

\smallskip

\noindent
\begin{center}
  \begin{minipage}{0.96\linewidth}
    \textbf{Claim 2.} \label{claim32} 
    For all $i\in\set{1,\dots,m}$, round $\theta_i+1$ is busy and
    the $k_i^{\mathit{th}}$ node that gets informed, gets informed in round $\theta_i$.
    \begin{proof}[Proof of Claim 2]
      We prove the claim by induction on $i$. First note that round
      $1=\theta_0+1$ is busy because in the first round, the source
      node transmits with probability $1$. We next show that for every
      $i\in \set{1,\dots,m}$, if round $\theta_{i-1}+1$ is busy, the
      $k_i^{\mathit{th}}$ node is informed in round $\theta_i$. For the sake of
      contradiction, assume that the $k_i^{\mathit{th}}$ node is informed
      before round $\theta_i$. Because round $\theta_{i-1}+1$ is
      assumed to be busy, Claim 1 then implies that the first $k_I$
      informed nodes all get isolated by round $\theta_i$. Because for
      all $i\leq m$, we have $k_i\leq k'<\min\set{k,n}$, there is a
      at least one node which does not get informed in the first
      $\theta_i$ rounds. Hence, in the stable connected backbone
      graph, there is at least one node $v$ which is not informed in
      the first $\theta_i$ rounds and which is connected to one of the
      first $k_i$ informed nodes. When this node gets isolated (by
      time $\theta_i$, node
      $v$ gets informed, a contradiction to the assumption that all
      $k_I$ nodes are informed before round $\theta_i$. It remains to
      prove that round $\theta_i+1$ is busy. We have already seen that
      this is true for $i=0$. For $i\geq 1$, it follows because the
      $k_i^{\mathit{th}}$ informed node gets informed in round $\theta_i$ and
      it therefore transmits with probability $1$ in round $\theta_i+1$.
    \end{proof}
  \end{minipage}
\end{center}

\smallskip

\noindent
From Claim 2, it follows that round $\theta_m+1$ is busy. Because time
$\theta_m$ is the latest time smaller than or equal to $t$ for such
that the number of busy rounds equals the number of free rounds, this
implies that in the first $t$ rounds, there are more busy than free
rounds. However, we have shown that from assuming that
$k'<\min\set{k,n}$, it follows that in the first $t$ rounds, we have
more free than busy rounds. We thus get a contradiction to the
assumption that the number of informed nodes is less than $\min\set{k,n}$.
\end{proof}

The following lemma shows the existence of a concurrency-resistant single-message broadcast algorithm to be used in the second generic algorithm. 

\begin{lemma}\label{lem:resistance1}
		There exists a concurrency-resistant algorithm which solves the single-message broadcast problem in $\infty$-interval connected dynamic $n$-node networks against a 0-oblivious adversary in $O(n\log^2n)$ rounds. 
\end{lemma}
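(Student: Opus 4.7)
The plan is to build a concurrency-resistant single-message broadcast algorithm on top of the Harmonic $k$-Limited Broadcast procedure of Lemma~\ref{lem:lim_bc} with $k=n$, which by itself takes $O(n\log^2 n)$ rounds. I would organize the algorithm into two sequential phases of $O(n\log^2 n)$ rounds each. In the \emph{primary phase}, every source (if any) starts an instance of Harmonic $n$-Limited Broadcast with its own message; every other node follows the harmonic forwarding rule based on the first message it hears, while additionally recording into a set $M_v$ every distinct broadcast message (tagged with its source ID) it ever receives. A node is declared \emph{confused} at the end of the primary phase if $|M_v|\ge 2$, or if $v$ is itself a source whose own message also arrived from a neighbor tagged with a different source ID. In the \emph{detection phase}, every confused node initiates a second Harmonic $n$-Limited Broadcast carrying a special FAIL token. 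Finally, node $v$ outputs SUCCESS (together with the unique message in $M_v$) iff $|M_v|=1$ throughout the execution and it never received FAIL; otherwise it outputs FAILURE.

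After fixing the algorithm, I would verify correctness case by case. If there are zero sources, nothing is ever transmitted, so $M_v=\emptyset$ at every node and everyone outputs FAILURE as desired. If there is exactly one source $s$, Lemma~\ref{lem:lim_bc} applied with $k=n$ and a polynomially-small $\eps$ immediately guarantees that all nodes receive $s$'s message during the primary phase w.h.p., so no node becomes confused, the detection phase transmits nothing, and every node outputs SUCCESS with the correct message.

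The main obstacle, as expected, is the case of at least two sources, where I need to argue both that (i) some node becomes confused by the end of the primary phase, and (ii) the FAIL token reaches every node by the end of the detection phase. For (i), the $\infty$-interval connectivity gives a fixed connected backbone graph, so any two source nodes are joined by a path in this backbone; I would then reuse the style of argument used to prove Lemma~\ref{lem:lim_bc}—tracking free versus busy rounds and invoking Lemma~\ref{lem:kuhn}—applied separately to the propagation of each source's message. The concurrent sources contribute only a constant factor of extra transmission probability into $P(r)$, which can be absorbed by enlarging the $\mathcal{T}=\Theta(\log n)$ parameter by a constant so that sufficiently many rounds remain free; each source's message then independently reaches every node along the backbone path, and in particular some interior node of the path ends up with at least two distinct messages in its $M_v$. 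For (ii), the detection phase is itself an instance of Harmonic $n$-Limited Broadcast (now possibly with many concurrent FAIL initiators, which only helps coverage), so Lemma~\ref{lem:lim_bc} with $k=n$ delivers the FAIL token to every node w.h.p. A union bound over the $\eps$ error probabilities of both phases, with $\eps=1/\mathrm{poly}(n)$, yields the overall w.h.p. guarantee, and the total running time is $O(n\log^2 n)$ as claimed.
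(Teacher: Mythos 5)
Your two-phase scheme (broadcast everything, then have any node that saw two distinct tagged messages flood a FAIL/$\bot$ token) is exactly the construction the paper uses, and your handling of the zero-source, one-source, and detection-phase cases matches the paper's. The one place your argument does not go through as written is step (i) for the multi-source case: you claim that ``each source's message independently reaches every node along the backbone path.'' Under the forwarding rule you yourself specified -- a node follows the harmonic schedule only for the \emph{first} message it hears -- this is false. On a backbone path between two sources $s_1$ and $s_2$, the two message fronts meet somewhere in the middle; a node first informed by $\mathcal{M}_1$ never forwards $\mathcal{M}_2$, so neither message traverses the whole path, and no amount of enlarging $\mathcal{T}$ fixes this, since it is a consequence of the forwarding rule rather than of contention.

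The conclusion you need is nevertheless true, via the argument the paper gives: the free/busy-round accounting shows that by the end of the first phase every node is informed by \emph{some} message and, moreover, every informed node gets isolated within the phase (Lemma~\ref{lem:kuhn}). Since at least two distinct messages are held and the stable backbone is connected, there must exist two backbone-adjacent nodes $u,v$ holding different messages; when the first of them, say $u$, is isolated, $v$ receives $u$'s message in addition to its own, so $|M_v|\ge 2$ and $v$ becomes confused. With that substitution your proof is complete and coincides with the paper's; the rest (FAIL flooding in phase two only helps coverage, union bound over the per-phase error probabilities, total time $O(n\log^2 n)$) is fine as you wrote it.
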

\begin{proof}
	Consider an algorithm which runs in two phases.
	Each phase is a single run of the harmonic broadcast algorithm in~\cite{dualgraph}.
	In the first phase all the source nodes initiate the broadcast algorithm.
	In the second phase, each node that has received at least two messages in the first phase initiates the broadcast algorithm with the broadcast message $\bot$. 
	At the end of the second phase, each node that receives only one broadcast message in the first phase and nothing in the second phase, detects a successful broadcast.
	Otherwise, each node that receives nothing in the first phase or receives $\bot$ in the second phase, detects an unsuccessful broadcast.
	
	If there exists no source nodes, then in the first phase all nodes receive nothing and therefore they detect the unsuccessful broadcast correctly. 
	Hence, for this case even the first phase is enough for all the nodes to detect the failure of the broadcast.
	Now, let us consider the case when there are more than one source nodes. 
	Based on the analysis of the algorithm in \cite{dualgraph}, due to the transmitting probability of the nodes in the network all the nodes will be informed by a broadcast message until the end of the first phase. 
	Since we have a connected network there exists at least two nodes which are connected in the stable backbone of the network and have received different broadcast messages.
	Hence, after their isolation at least one of them knows at least two different broadcast messages. 
	Then in the second phase of the execution, there exists at least one node that initiates the broadcast with $\bot$, and hence all the nodes receive $\bot$ by the end of the second phase.
\end{proof}

\noindent
\textbf{Theorem~\ref{thm:01} (restated).} Assume that we are given an $\infty$-interval connected dynamic
  $n$-node network controlled by a $0$-oblivious adversary.  Using
  store-and-forward algorithms, for communication capacity $c=1$,
  $s$-multi-message broadcast can be solved, w.h.p., in time
  $O\big(ns\log^2n\big)$, whereas for arbitrary $c\geq 1$, it can be
  solved, w.h.p., in time $O\big(\big(1+\frac{s}{c}\big)n\log^4 n\big)$.
\begin{proof}
	Let us assume Algorithm~\ref{alg:CCLalgorithm} performs the $k$-limited broadcasting by running the Harmonic $k$-Limited Broadcast procedure. 
	Considering the selection of $\alpha$ in Algorithm~\ref{alg:CCLalgorithm}, it is enough to set $\epsilon$ in the Harmonic $k$-Limited Broadcast procedure to $1/2$. 
	Then it takes $t(n,k) \leq a k \ln^2n$ rounds for a sufficiently large constant $a$. 
	Therefore, $t(n,k) = k t(n)$, where $t(n) = O(\log^2n)$. 
	Due to Corollary~\ref{cor:CCLalgorithm}, Algorithm~\ref{alg:CCLalgorithm} solves the $s$-multi-message broadcast problem in this setting in time $O\left(\left(1+\frac{s}{c}\right)n\log^4 n\right)$. 
	
	On the other hand, Lemma~\ref{lem:resistance1} shows the existence of a concurrency-resistant single-message broadcast algorithm with running time of $O(n\log^2n)$.
	Hence, by embedding the concurrency-resistant single-message broadcast algorithm in Algorithm~\ref{alg:Galgorithm}, the problem in this setting, based on Lemma~\ref{lem:Galgorithm}, can be solved in time 
\[
	O\left((s+\log n)n\log^2n\right).
\]
\end{proof}

\subsubsection{(Setting II) \boldmath$1$-Interval Connected
  Dynamic Networks}

For $1$-interval connected networks, we assume that the adversary is
$1$-oblivious. Note that in \cite{opodis15}, it is shown that even the
single-message problem cannot be solved with a $0$-oblivious adversary
in $1$-interval connected networks. We adapt the homogenous algorithm by
Clementi et al.~\cite{clementi09} to obtain a $k$-limited and a
concurrency-resistant single-message broadcast procedure. The
single-message broadcast algorithm of \cite{clementi09} is a simple
randomized algorithm where every node that knows the broadcast
message, broadcasts the message with probability $\ln(n)/n$. In
\cite{clementi09}, it is shown that in every round, the probability
that a node knowing the broadcast message succeeds in sending it to a
neighboring node which does not know the message is at least
$\ln(n)/n$. In order to have a $k$-limited broadcast algorithm which
succeeds in reaching at least $k$ nodes with probability at least
$1/2$, we therefore need to run the algorithm of \cite{clementi09} for
$\Theta(kn/\ln n)$ rounds. Hence, the second statement of Theorem~\ref{thm:02}
follows directly from Corollary~\ref{cor:CCLalgorithm}.

In order to obtain a concurrency-resistant algorithm, we can first
observe that the algorithm of \cite{clementi09} also works if several
nodes try to broadcast the same message. Hence, in order to use a
similar idea as in the $\infty$-interval connected case, we have to
make sure that in the case of multiple broadcast messages, some node
detects that there are at least two messages. Using the same analysis
as for a single broadcast message, one can see that after
$O(n^2/\log n)$ rounds, every node knows at least one of the broadcast
messages. Our next goal is to have at least one node which knows at
least two different broadcast messages. As long as this is not the
case, in every round, there is at least one
edge connecting two nodes $u$ and $v$ which know different broadcast messages. For
the next $O(n\log n)$ rounds, each node now transmits the broadcast
message it knows with probability $1/n$ (assuming a node knows exactly
one broadcast message). The probability that either $u$ or $v$ decide
to broadcast their message and no other node in the network decides to
broadcast is at least $2/en$ and thus after $O(n\log n)$ rounds,
w.h.p., there is at least one node which knows two broadcast messages.

Now, we can get a concurrency-resistant single-message broadcast
algorithm in the same way as for $\infty$-interval connected
graphs. We run the algorithm of \cite{clementi09} again such that each
node which knows at least two different broadcast messages after the
first phase, broadcasts $\bot$. The time complexity of the
concurrency-resistant broadcast algorithm is $O(n^2/\log n)$ and
therefore the first statement of Theorem~\ref{thm:02} now directly follows from
Lemma~\ref{lem:Galgorithm}.

\subsubsection{(Setting III) \boldmath$T$-Interval Connected
  Dynamic Networks}

For arbitrary $T$-interval connected graphs, we adapt the algorithm of
\cite{opodis15}. We assume that the adversary is $\tau$-oblivious and
we define $\psi:=\min\set{\tau,T}$. We assume that
$\psi = \Omega(\log^2 n)$, note that otherwise, Theorem~\ref{thm:03} already
follows from the statement of Theorem~\ref{thm:02} for $T=1$. We also assume that
$\psi=O(n)$, as otherwise Theorem~\ref{thm:03} follows from Theorem~\ref{thm:01} for
$T=\infty$. The algorithm of \cite{opodis15} runs in phases of length
$\psi$. The progress in a single phase is analyzed in the proof of
Theorem~\ref{thm:01}. A phase of length $\psi$ is called successful if at least
an $\Omega(\psi/(n\log^2n))$-fraction of all uninformed nodes are
informed and it is shown, in the proof of Theorem 1 of~\cite{opodis15}, that a phase is successful with probability
at least $\psi/(8en)$.

Assume that we want to obtain a $k$-limited single-message broadcast
algorithm. If $k\leq n/2$, as long as fewer than $k$ nodes are
informed, the number of uninformed nodes is at least $n/2$. Hence, in
this case, in a successful phase, at least $\Omega(\psi/\log^2n)$
nodes are newly informed. Hence, to inform $k\leq n/2$ nodes, we need
$O(k\log^2(n)/\psi)$ successful phases and we thus need
$O(kn\log^2(n)/\psi^2)$ phases in total to inform $k$ nodes with
probability at least $1/2$. The number of rounds to solve $k$-limited
single-message broadcast is therefore
$O(kn\log^2(n)/\psi)=O(kn\log^2(n)/\min\set{\tau,T})$. For $k>n/2$, we
run the complete single-message broadcast algorithm of \cite{opodis15}
and obtain a time complexity of
$O(n^2\log^3(n)/\min\set{\tau,T})$. The second statement of
Theorem~\ref{thm:03} now follows directly from Lemma~\ref{lem:CCLalgorithm}.

To obtain a concurrency-resistant algorithm we use the same approach
as in the $1$-interval connected case. We first run the algorithm of
\cite{opodis15}. From the description and analysis of the algorithm is
not hard to see that the algorithm also works if several source nodes
start the algorithm with the same broadcast message. In addition, the
transmission behavior of a node does not depend on the content of the
broadcast message and thus, if there are $2$ or more broadcast
messages, one can easily see that at the end of the algorithm, either
there exists at least one node which knows at least two different broadcast messages or each node knows exactly one message. If every node
knows exactly one broadcast message, we can create one node which
knows at least two different broadcast messages in $O(n\log n)$ rounds
in the same way as in the $1$-interval connected case. The
concurrency-resistant algorithm is then completed by running the
algorithm of \cite{opodis15} once more, where every node which knows
at least two different broadcast messages, uses the algorithm to
broadcast $\bot$ to all nodes. The total time complexity of the
algorithm is $O(n^2\log^3(n)/\min\set{\tau,T} + n\log n)$ and
therefore the first statement of Theorem~\ref{thm:03} follows from
Lemma~\ref{lem:Galgorithm}.

\section{Multi-message Broadcast Using Network Coding}
\label{sec:apps}
	To solve the multi-message broadcast problem in this setting we use randomized linear network coding (RLNC) to increase throughput while using an adapted version of the homogeneous randomized broadcast protocol by Clementi et al.~\cite{clementi09} as the underlying broadcast protocol. 
	The RLNC algorithm thus tells the nodes \textit{what} to send and the broadcast protocol tells them \textit{when} to send. 
	We describe and analyze the RLNC algorithm based on the elegant work by Haeupler~\cite{haeupler11}. 
	
	Let us assume that $\mathcal{M}_1, \mathcal{M}_2, \dots , \mathcal{M}_s$ are the $s$ broadcast messages initially given to the source nodes. 
	We can represent these messages as vectors over a finite field $\mathbb{F}_q$, where $q$ is a prime or prime power.
	For all $1\leq i \leq s$, let $\overrightarrow{m}_i \in
        \mathbb{F}_q^l$ be the vector representation for
        $\mathcal{M}_i$, where $l$ is the maximum length of a message in
        base $q$ notation.
	In each round, each node that decides to transmit, sends a packet $(\overrightarrow{\mu}, \overrightarrow{m})$, where $\overrightarrow{\mu}$ is a coefficient vector and $\overrightarrow{m}$ is a message vector (i.e., a linear combination of the broadcast messages) such that
\begin{align*}
		\overrightarrow{\mu}  = (\mu_1, \mu_2, \dots , \mu_s) \in \mathbb{F}_q^s \ \ \ \ \ \text{and} \ \ \ \ \ \ 
		\overrightarrow{m}  = \sum\limits_{i=1}^{s}\mu_i\overrightarrow{m}_i \in \mathbb{F}_q^l.
\end{align*}
	When a node has received $s$ packets with $s$ linearly independent coefficient vectors, it can reconstruct all the $s$ broadcast messages by applying Gaussian elimination. 
	In other words, when the span of the received coefficient vectors  by a node is the full space $\mathbb{F}_q^s$, then it can decode all the broadcast messages. 
	Here we explain the algorithm to solve the multi-message broadcast problem in this setting. 
	
\noindent
\textit{Algorithm.}\\
	Initially, each node has only one packet. 
	A source node which has broadcast message $\mathcal{M}_i$, has the packet $(\overrightarrow{\mu}, \overrightarrow{m}_i)$, where $\overrightarrow{\mu}$ is the $i^{\mathit{th}}$ standard basis vector.
	A node which is not a source node has the packet $(\overrightarrow{0}, \overrightarrow{0})$. 
	During the execution, each node $v$ keeps a complete span $\psi(v)$ of all the packets it has received (i.e., $\psi(v)$ is the set of all the linear combinations of the received packets). 
	In each round, each node $v$ chooses a packet, uniformly at random, from $\psi(v)$ and transmits it with probability $1/n$. 
\\[.1cm]
\noindent
\textit{Analysis.}\\
	Let $\chi(v)$ denote the set of all linear combinations of the coefficient vectors in $\psi(v)$ for node $v$. 
	During the algorithm execution, for any node $v$, by receiving more messages the coefficient subspace $\chi(v)$ grows monotonically to the full space $\mathbb{F}_q^s$. 
	We say that node $v$ in round $r$ \textit{knows about} $\overrightarrow{\mu} \in \mathbb{F}_q^s$, if $\overrightarrow{\mu}$ is not orthogonal to $\chi(v)$ in that round, that is, there exits some vector $\overrightarrow{c} \in \chi(v)$ such that $\langle \overrightarrow{c}, \overrightarrow{\mu} \rangle \neq 0$. 

\begin{lemma}(Lemma 4.2 from~\cite{haeupler11})\label{lem:haeupler}
		If node $u$ knows about vector $\overrightarrow{\mu} \in \mathbb{F}_q^s$ and transmits a packet to node $v$, then $v$ knows about $\overrightarrow{\mu}$ afterwards with probability at least $1-1/q$. 
		Furthermore, if a node knows about all the vectors in $\mathbb{F}_q^s$, it is able to decode all the $s$ broadcast messages.
\end{lemma}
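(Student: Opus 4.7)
The plan is to handle the two statements separately, reducing each to a clean statement about subspaces of $\mathbb{F}_q^s$ and their orthogonal complements. The whole argument is purely linear algebraic once we unfold the definitions of $\chi$ and ``knows about'', so I do not expect any probabilistic subtlety beyond a single counting step.

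For the first claim, I would start by fixing $\overrightarrow{\mu}$ and looking at the hyperplane
\[
H_{\overrightarrow{\mu}} := \set{\overrightarrow{c}\in\chi(u)\,:\,\langle \overrightarrow{c}, \overrightarrow{\mu}\rangle = 0}
\]
inside the subspace $\chi(u)$. The hypothesis that $u$ knows about $\overrightarrow{\mu}$ says precisely that $H_{\overrightarrow{\mu}}$ is a proper subspace of $\chi(u)$, hence it has codimension at least one and therefore size at most $|\chi(u)|/q$. Next, I would observe that when $u$ transmits, it picks a packet uniformly at random from $\psi(u)$, which corresponds (through the bijection between packets and their coefficient vectors that $u$ maintains) to a uniformly random element $\overrightarrow{c'}\in\chi(u)$. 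So $\Pr[\overrightarrow{c'}\in H_{\overrightarrow{\mu}}]\leq 1/q$, and on the complementary event we have $\langle \overrightarrow{c'}, \overrightarrow{\mu}\rangle\neq 0$. Since $\overrightarrow{c'}$ is now in $\chi(v)$ after $v$ receives the packet, $v$ knows about $\overrightarrow{\mu}$ with probability at least $1-1/q$.

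For the second claim, I would argue by contraposition. Suppose $\chi(v)$ is a proper subspace of $\mathbb{F}_q^s$; then its orthogonal complement $\chi(v)^{\perp}$ is nontrivial, so pick any nonzero $\overrightarrow{\mu}\in\chi(v)^\perp$. By definition $\langle \overrightarrow{c},\overrightarrow{\mu}\rangle=0$ for every $\overrightarrow{c}\in\chi(v)$, meaning $v$ does \emph{not} know about $\overrightarrow{\mu}$. Thus knowing about all vectors in $\mathbb{F}_q^s$ forces $\chi(v)=\mathbb{F}_q^s$. Because the coefficient vectors stored at $v$ span the whole of $\mathbb{F}_q^s$, in particular $v$ has received $s$ packets with linearly independent coefficient vectors and can apply Gaussian elimination to recover $\overrightarrow{m}_1,\dots,\overrightarrow{m}_s$.

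The only place that needs real care is the first claim, specifically the step asserting that choosing a uniformly random packet from $\psi(u)$ is the same as choosing a uniformly random vector from $\chi(u)$; this relies on the fact that $\psi(u)$ is exactly the set of linear combinations of received packets and that $u$ can sample uniformly from this span by sampling uniform coefficients. After that, the codimension-one counting bound $|H_{\overrightarrow{\mu}}|\leq |\chi(u)|/q$ is standard, and the rest is a one-line deduction. I would keep the write-up tight, deferring to \cite{haeupler11} for anything beyond these two steps.
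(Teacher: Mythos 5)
Your argument is correct. Note that the paper itself does not prove this lemma---it imports it verbatim as Lemma~4.2 of \cite{haeupler11}---and your write-up is essentially the standard projection argument from that source: ``$u$ knows about $\overrightarrow{\mu}$'' means the functional $\overrightarrow{c}\mapsto\langle\overrightarrow{c},\overrightarrow{\mu}\rangle$ is not identically zero on $\chi(u)$, so its kernel is a codimension-$\geq 1$ subspace of size at most $|\chi(u)|/q$, and a uniformly random element of the span avoids it with probability at least $1-1/q$; the decoding claim is the contrapositive via the nontrivial orthogonal complement of a proper subspace. The one step you rightly flag---that sampling uniformly from $\psi(u)$ induces the uniform distribution on $\chi(u)$---holds because every packet in $\psi(u)$ satisfies $\overrightarrow{m}=\sum_i\mu_i\overrightarrow{m}_i$ and is therefore determined by its coefficient vector, so the projection $\psi(u)\to\chi(u)$ is a bijection. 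No gap.
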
	
\begin{lemma}
\label{lem:vectorbc}
	In each round of the algorithm execution, for any vector $\overrightarrow{\mu}\in \mathbb{F}_q^s$, if there exists some node that does not know about $\overrightarrow{\mu}$, then there exists at least one new node knowing about $\overrightarrow{\mu}$ afterwards with probability at least $1/2en$. 
\end{lemma}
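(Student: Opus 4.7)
The plan is to exploit the one-round connectivity of the network together with Lemma 4.2 (Haeupler). Fix a round $r$ and a vector $\overrightarrow{\mu}\in\mathbb{F}_q^s$, and suppose there is at least one node that does not know about $\overrightarrow{\mu}$. If every node knew about $\overrightarrow{\mu}$ at the start of round $r$, the statement is vacuous, so assume this is not the case. Partition $V$ into the set $K$ of nodes that know about $\overrightarrow{\mu}$ and $\overline{K}=V\setminus K$. Since the network is $1$-interval connected, the graph $G_r$ is connected, and since both $K$ and $\overline{K}$ are nonempty, there must be an edge $\{u,v\}\in E_r$ with $u\in K$ and $v\in \overline{K}$. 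Note that because the adversary is at most $1$-oblivious, the topology $G_r$ (and hence the existence of this edge) is determined independently of the random coin flips of round $r$.

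Next I would lower bound the probability that $v$ actually receives $u$'s packet in round $r$. In the protocol, each node transmits independently with probability $1/n$. The event that $u$ transmits and no other neighbor of $v$ in $G_r$ transmits has probability at least
\[
\frac{1}{n}\left(1-\frac{1}{n}\right)^{|N_{G_r}(v)|-1}\geq \frac{1}{n}\left(1-\frac{1}{n}\right)^{n-1}\geq \frac{1}{en}.
\]
Conditioned on this event, $v$ successfully receives $u$'s packet (there is no collision at $v$).

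Given that $v$ receives a packet from $u\in K$, Lemma 4.2 (Haeupler) implies that $v$ knows about $\overrightarrow{\mu}$ after the round with probability at least $1-1/q$. Taking $q\geq 2$, this is at least $1/2$. Combining with the previous step and the independence between the choice of $u$'s transmitted packet (used in Lemma 4.2) and the Bernoulli transmission decisions of the other nodes, the probability that $v$—which previously did not know about $\overrightarrow{\mu}$—comes to know about $\overrightarrow{\mu}$ is at least
\[
\frac{1}{en}\cdot\left(1-\frac{1}{q}\right)\geq \frac{1}{2en},
\]
as claimed. Since $v\in \overline{K}$, $v$ is genuinely a new node knowing about $\overrightarrow{\mu}$.

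The only subtle point I expect is the independence bookkeeping in the last step: the event that $u$ transmits alone at $v$ uses the transmission-indicator coins of round $r$, while the event that the transmitted packet is not orthogonal to $\overrightarrow{\mu}$ uses $u$'s uniform choice from $\psi(u)$, and these are independent random choices, so the product bound is valid. The use of $1$-obliviousness is what allows us to condition on the edge $\{u,v\}$ existing before invoking any randomness of round $r$.
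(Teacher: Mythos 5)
Your proposal is correct and follows essentially the same route as the paper's proof: use $1$-interval connectivity to find an edge between a knowing node $u$ and a non-knowing node $v$, bound the probability that $u$ is the unique transmitting neighbor of $v$ by $\frac{1}{n}\left(1-\frac{1}{n}\right)^{n-1}\geq \frac{1}{en}$, and multiply by the $1-1/q\geq 1/2$ guarantee of Lemma~\ref{lem:haeupler}. Your added remarks on the independence of the transmission coins from the packet choice and on why $1$-obliviousness lets one fix the edge before the round's randomness are details the paper leaves implicit, but the argument is the same.
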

\begin{proof}
		Let us fix an arbitrary vector $\overrightarrow{\mu} \in \mathbb{F}_q^s$ and some positive integer $r$ such that there exists some node in the network in round $r$ that does not know about $\overrightarrow{\mu}$. 
		Since the network is connected, there exists some node $v$ which does not know about $\overrightarrow{\mu}$ and is connected to node $u$ which knows about $\overrightarrow{\mu}$, in round $r$. 
		knowing the fact that all the informed nodes transmit with probability $1/n$, the probability that $u$ is the only neighbor of $v$ transmitting in round $r$ is at least 
\[
		\frac{1}{n}\left(1- \frac{1}{n}\right)^{n-1} > \frac{1}{en}.
\]
		Based on Lemma~\ref{lem:haeupler}, if $u$ is the only transmitting neighbor of $v$, then $v$ knows about $\overrightarrow{\mu}$ afterwards with probability $1-1/q\geq 1/2$.
		Therefore, in each such a round, one new node knows about $\overrightarrow{\mu}$ afterwards with probability at least $1/2en$.
		
\end{proof}
\noindent
\textbf{Theorem~\ref{thm:04} (restated).}
	Using linear network coding, in $1$-interval connected dynamic networks with communication capacity $1$ and a $1$-oblivious adversary, $s$-multi-message broadcast can be solved in time $O(n^2+ ns)$, w.h.p.
\begin{proof}
		Let us assume that the algorithm runs for $cn(n+s)$ rounds for some constant $c$. 
		Then due to Lemma~\ref{lem:vectorbc}, for any arbitrary coefficient vector $\overrightarrow{\mu} \in \mathbb{F}_q^s$, the number of nodes that know about it by the end of the algorithm execution is dominated by a binomial random variable $X\sim Bin(cn(n+s), 1/2en)$.
		 Choosing a large enough constant as $c$ and applying Chernoff bound, we can show that the probability of having one node not knowing about $\overrightarrow{\mu}$ is less than $1/nq^s$.
		 Applying union bound over all $q^s$ coefficient vectors in $\mathbb{F}_q^s$, all nodes know about all coefficient vectors in $\mathbb{F}_q^s$ w.h.p. , hence, they can decode all the broadcast messages. 		
\end{proof}

\hide{	
\subsection{Multi-message broadcast in {\boldmath$T$}-interval connected dynamic networks}
	To solve the multi-message broadcast problem in $T$-interval connected networks against a $\tau$-oblivious adversary, we generalize the randomized algorithm introduced by Ahmadi et al.~\cite{opodis15} which solves the single message broadcast in the similar setting. 
	The algorithm has a time complexity of $O\big((1+n/\psi)\cdot ns\log^3n\big)$, where $\psi := min\{\tau , \mathcal{T}\}$.
	Let us assume that there are $s$ source nodes with broadcast messages $\mathcal{M}_1, \mathcal{M}_2, \dots , \mathcal{M}_s$.
	The algorithm runs in phases that are defined as follows.
{\color{red} Should we here cite~\cite{kuhn:podc10} and~\cite{clementi09} or it's fine?}

\begin{definition}[Phase]\label{de:phase}
  The $j^{\mathit{th}}$ time interval of $\tphase$ consecutive rounds is called
  phase $j$, where $j$ is a positive integer. Hence, phase $j$ starts
  at time $(j-1)\tphase$ and ends at time $j\tphase$ and it consists of
  rounds $(j-1)\tphase+1,\dots,j\tphase$.
\end{definition} 
	For each node $v$, let $t_{v,i}$ be the round where $v$ receives $\mathcal{M}_i$ for the first time.
	Fix an arbitrary round $t$ in phase $j$. 
	For each node $v$ in round $t$, the set of messages are divided into three subsets; The set $\mathcal{A}$ of messages that $v$ received for the first time in previous phases (the phases $1$ to $j-1$), the set $\mathcal{B}$ of messages that $v$ received for the first time in phase $j$ until round $t$, the set $\mathcal{C}$ of messages that $v$ has not received yet. Accordingly, we say that $v$ is \textit{previously informed} about set $\mathcal{A}$, \textit{newly informed} about set $\mathcal{B}$, and \textit{uninformed} about set $\mathcal{C}$.  

	The algorithm behaves differently in the two halves of each phase during the execution.
	In the first half of each phase, any node that is previously or newly informed about a non-empty set, chooses one of the messages it has received uniformly at random and then transmits it with probability $1/n$. 
	In the second half of the phase, only the nodes that are newly informed about a non-empty set of broadcast messages participate in the execution. In each round $t$, any participant $v$ decides to transmit each message $\mathcal{M}_i$ with \textit{decision probability} $p_{v,i}$, given by
\begin{equation}\label{eq:harmonicalg}
  \forall t > \floor{ \frac{t}{\tphase}} \cdot \tphase
  +\ceil{\frac{\tphase}{2}}:\, \ \  \forall v \in V : \ \ \ \ \ \ 
  p_{v,i}(t) :=
  \begin{cases}
  \frac{1}{1+ \floor{\frac{t-\hat{t}_{v,i}-1}{\mathcal{T}}}},
  \ \ \ & \text{ if } \floor{t/\tphase} \cdot \tphase < t_{v,i} < t\\
  0 & \text{ otherwise, } 
  \end{cases}
\end{equation}
where 
\[
\hat{t}_{v,i} := \begin{cases}
  \floor{ \frac{t}{\tphase}} \cdot \tphase
  +\ceil{\frac{\tphase}{2}}, & \text{ if } \floor{ t/\tphase} \cdot
  \tphase < t_{v,i} < \floor{t/\tphase} \cdot \tphase +\ceil{\tphase/2}\\
  t_{v,i}, & \text{ if } \floor{t/\tphase} \cdot \tphase +\ceil{\tphase/2} < t_{v,i} < t
\end{cases}
\]
and $\mathcal{T}$ will be fixed later. For any node $v\in V$ and any $i \in [s]$, if $\mathcal{M}_i$ is the only message that $v$ decides to transmit, $v$ transmits $\mathcal{M}_i$ in round $t$. 
	Otherwise, it remains silent.

	Thus, in the second half of the phase, for any message $\mathcal{M}_i$, $1\leq i \leq s$, any node that receives it for the first time in the first half of the current phase, starts considering its transmission in the first round of the second half and if the node receives it for the first time in the second half, it starts considering transmission of the message immediately from the next round. 
	When the node starts considering transmission of a message, for the first $\mathcal{T}$ rounds the decision probability is 1, for the next $\mathcal{T}$ rounds is $1/2$, then $1/3$ and so on. \\[.3cm]
\noindent
\textbf{Analysis.} 
	The analysis is based on the analysis in~\cite{opodis15} but a more involved one. Let $P(t)$ denotes the sum of decision probabilities of all messages by all nodes in round $t$, i.e., $P(t) := \sum\limits_{v\in V} \sum\limits_{i=1}^s p_{v,i}(t)$. Since the graph is $T$-interval connected, during each phase $j$, we have a stable spanning subgraph which is called the \textit{backbone} of the phase, denoted by $\Psi_j$. 
	If $P(t)\geq 1$, we call $t$ a \textit{busy} round and otherwise a \textit{free} round. If in round $t$, node $v$ is the only node that transmits a message, we say that $v$ is \textit{isolated} on that message in round $t$. 
	In an execution of the algorithm, for any message $\mathcal{M}_i$ we consider a non-decreasing sequence $\mathcal{W}_i = t_{1,i}, t_{2,i}, \dots, t_{s,i}$, where $t_{1,i} = 0$ and $t_{j,i}$ is the round in which the $j^{\mathit{th}}$ node receives $\mathcal{M}_i$ for the first time. We call $\mathcal{W}_i$ the wake-up pattern of message $\mathcal{M}_i$ in the execution. 
	We can consequently consider the set $\mathcal{W} = \{\mathcal{W}_1, \mathcal{W}_2, \dots, \mathcal{W}_s\}$, as the complete wake up pattern of the execution.
	Notice that the transmitting probabilities of the nodes during the execution can be determined by the complete wake up pattern. 
	With the same argumentation in~\cite{kuhn:podc10} we can adapt Lemma 10 and 11 to have the following lemma. 
\begin{lemma} (adapted from \cite{kuhn:podc10})
For any possible set of wake-up patterns $\{\mathcal{W}_1, \mathcal{W}_2, \dots, \mathcal{W}_s\}$, the total number of busy rounds is at most $ns \mathcal{T} H(ns)$. 
\end{lemma} 
\noindent
Similar to the proof of Lemma 13 in~\cite{kuhn:podc10} we can prove the following adapted lemma. 
\begin{lemma}\label{lem:iso} (adapted from \cite{kuhn:podc10})
Consider a node $v$, and let $t_{v,i}$ be the time when $v$ receives message $\mathcal{M}_i$ for the first time. Further, let $t>t_{v,i}$ be such that at least half of the rounds $t_{v,i}+1, \dots , t$ are free. If $\mathcal{T} \geq 12\ln (ns/\epsilon)$ for some $\epsilon > 0$, then with probability larger than $1-\epsilon /ns $ there exists a round $t' \in [t_{v,i}+1,t]$ such that $v$ is isolated on $\mathcal{M}_i$ in round $t'$. 
\end{lemma}

\noindent
We define $\theta_0:=0$, and for $\ell>0$, $\theta_\ell$ is the first time after time $\theta_{\ell-1}$ that the number of free and busy rounds in the time interval $[\theta_{\ell-1}, \theta_{\ell}]$ are equal. 
	Moreover, let $m$ be a non-negative integer such that $\theta_m$ is the last such time in the time interval $[0,t]$. 

\begin{lemma}
	For all phases, in each time interval $[\theta_{\ell-1}, \theta_\ell]$, where $\ell\in [s]$, if round $\theta_{\ell-1}+1$ is busy then any node $v$ with $\hat{t}_{v,i}\in \{\theta_{\ell-1},\dots , \theta_\ell-1\}$ gets isolated on $\mathcal{M}_i$ in some round $t'\in \{ \hat{t}_{v,i}+1, \dots , \theta_\ell\}$ with high probability. 
\end{lemma}

\begin{definition}
	Node $v$ is called ``available to know $\mathcal{M}_i$'' in round $r$ of phase $j$, if either it is newly informed about $\mathcal{M}_i$ or there exists a path in $\Psi_i$ from $v$ to a node which is newly informed about $\mathcal{M}_i$. 
	Furthermore, the number of nodes in round $r$ that are available to know $\mathcal{M}_i$ is called the number of ``availabilities for message $\mathcal{M}_i$''. 
	Moreover, the event that a node is newly informed about some message is called one ``learning'' for that message. 
\end{definition}

\begin{lemma}
	Fix an arbitrary phase and a positive integer $\ell$. If at the beginning of each phase there exists at least one node which is uninformed about $\mathcal{M}_i$ but available to know it, then w.h.p. at least one node gets informed about $\mathcal{M}_i$ in round $\theta_\ell$.
\end{lemma}
\begin{proof}

\end{proof}

\begin{lemma}
	Consider an arbitrary phase and assume that at the beginning of the second $\lfloor \psi/2\rfloor$ rounds of the phase totally there are $z$ availabilities for the broadcast messages. Then, w.h.p., for some constant $c$ the total number of learnings in this phase is at least $\min\{z, c\psi /\ln^2n\}$.
\end{lemma}
\begin{proof}

\end{proof}

\noindent
\textbf{Theorem 1.4. (restated)} Let $T \geq 1$ and $\tau \geq 1$ be positive integer parameters. Multi-message broadcast with $s$ broadcast messages in $T$-interval connected dynamic networks against a $\tau$-oblivious adversary can be solved in time
\[
 O\left(\left(1+\frac{n}{min\{\tau , T\}}\right)\cdot ns\log^3n\right),
 \]
 where communication capacity is 1. 
 
\begin{proof}
 	
\end{proof}

\vspace{1cm}
{\color{red} incomplete!}
}



\section{Multi-message Broadcast Lower Bound}
\label{sec:lower}
	
In this section we give a lower bound for solving the multi-message
broadcast problem in an $\infty$-interval connected network controlled
by a $0$-oblivious adversary.

Recall that in an $\infty$-interval connected dynamic network, there
is a static connected spanning subgraph which is present
every round. In the following, we refer to this graph as the stable
subgraph. We first prove a simple lower bound for the case where the
stable subgraph has a non-constant diameter. With a more involved
argument, we then extend the lower bound to the case where the
stable subgraph has a constant diameter.

\begin{lemma}\label{lemma:simplelower}
  Assume that $G$ is an $n$-node graph with maximum degree
  $\Delta=O(1)$. If $G$ is the stable subgraph of an $\infty$-interval
  connected dynamic radio network, with a $0$-oblivious adversary,
  every $s$-multi-message broadcast algorithms requires at least
  $\Omega(ns/c)$ rounds, where $c\geq 1$ is the communication capacity
  of the network.
\end{lemma}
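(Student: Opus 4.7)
The plan is to design a $0$-oblivious adversary that, against any algorithm, limits the number of new ``learnings'' (pairs of a node $v$ and a broadcast message $\mathcal{M}$ that $v$ did not previously know) to $O(c)$ per round. Since completing $s$-multi-message broadcast requires $s(n-1) = \Omega(ns)$ such learnings in total, the lower bound $T = \Omega(ns/c)$ then follows by division.

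\textbf{The adversary.} In each round $r$, the $0$-oblivious adversary is allowed to inspect the set $X_r$ of transmitters chosen by the algorithm in round $r$ before committing to $G_r$, subject only to $G_r \supseteq G$. I would have it behave as follows. If $|X_r| \leq 1$, set $G_r := G$. If $|X_r| \geq 2$, for each listener $v$ that has at least one transmitting stable neighbor, add a single edge from $v$ to some transmitter $u' \in X_r$ not already in the stable neighborhood $N_G(v)$; all other listeners are left with their stable neighborhood unchanged. A valid choice of $u'$ exists because $|X_r| \geq 2$ while $|N_G(v)| \leq \Delta = O(1)$, so $X_r \setminus N_G(v)$ is non-empty. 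Since the adversary only adds edges on top of $G$ and uses only round-$r$ randomness, the strategy is valid and $0$-oblivious.

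\textbf{Counting learnings.} Under this adversary, no learning occurs in rounds with $|X_r| = 0$ or $|X_r| \geq 2$: in the latter case, every listener with a transmitting stable neighbor has, by construction, at least two transmitting neighbors in $G_r$ and thus sees a collision, while every listener without a transmitting stable neighbor has no transmitting neighbor at all and receives silence. In rounds with $|X_r| = 1$ and unique transmitter $u$, only the at most $\Delta$ stable neighbors of $u$ receive $u$'s packet, and since each packet carries at most $c$ broadcast messages, at most $\Delta c = O(c)$ new learnings occur. Accumulating across $T$ rounds and comparing with the $s(n-1) = \Omega(ns)$ required learnings yields $T = \Omega(ns/c)$.

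\textbf{Main obstacle.} The delicate point is verifying that in the $|X_r| \geq 2$ case the adversary really suppresses every successful reception, which reduces to the per-listener case analysis above together with the fact that $X_r$ contains at least two transmitters while $v$ has at most $\Delta = O(1)$ stable neighbors, so a non-stable transmitter is always available for collision creation. The rest is routine: we add at most one edge per listener, so the requirement $G \subseteq G_r$ is automatic, and the learning count per round is bounded by a fixed $O(c)$ quantity independent of the algorithm's behavior.
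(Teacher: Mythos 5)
Your adversary is essentially the one the paper uses: inspect the round-$r$ transmitter set, force collisions whenever two or more nodes transmit, and restrict to the stable graph $G$ when exactly one node transmits, so that a successful round informs only $\Delta=O(1)$ nodes. (The paper simply makes $G_r$ the complete graph when $|X_r|\geq 2$, which immediately gives every listener at least two transmitting neighbors; your per-listener edge-addition achieves the same thing but your justification has a small slip: from $|X_r|\geq 2$ and $|N_G(v)|\leq\Delta$ it does not follow that $X_r\setminus N_G(v)\neq\emptyset$ --- e.g.\ $X_r$ could consist of exactly two stable neighbors of $v$ --- though in that case $v$ already collides and no edge is needed, so the construction is easily repaired.)

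The genuine gap is in the counting step. You charge each successful round at most $\Delta c$ ``learnings,'' where a learning is a pair $(v,\mathcal{M})$, on the grounds that ``each packet carries at most $c$ broadcast messages.'' That is a store-and-forward assumption. The lemma is stated for \emph{every} $s$-multi-message broadcast algorithm, and it is the ingredient used to prove the part of Theorem~\ref{thm:05} that explicitly holds ``even when using network coding.'' Under network coding a packet is an arbitrary function of $O(cB)$ bits, not a set of at most $c$ messages; a single received packet can cause a node to decode many broadcast messages at once (e.g.\ the last of $s$ linearly independent combinations lets it decode all $s$), so the bound of $\Delta c$ new node--message pairs per round is false in general. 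The paper avoids this by counting information rather than messages: the nodes collectively must acquire $\Theta(nsB)$ bits, and in a round with a unique transmitter at most $\Delta\cdot O(cB)=O(cB)$ bits are delivered, giving $\Omega(ns/c)$ rounds. Replacing your learning count with this bit-counting argument closes the gap; as written, your proof only establishes the lemma for store-and-forward algorithms.
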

\begin{proof}
  Recall that a $0$-oblivious adversary can construct the
  communication graph of a given round $r$ after the random decisions
  of all nodes in round $r$. A $0$-oblivious adversary therefore in
  particular knows which nodes are transmitting in round $r$ before
  determining the graph of round $r$. Given an $s$-multi-message
  broadcast algorithm $\mathcal{A}$, a $0$-oblivious adversary
  constructs the sequence of communication graphs as follows. In every
  round in which $2$ or more nodes decide to transmit, the
  communication graph is a complete graph. In all other rounds, the
  communication graph is only the stable graph $G$. Hence, in rounds
  with $2$ or more nodes transmitting, all $n$ nodes will experience a
  collision and therefore the algorithm cannot make any progress. In
  rounds where $0$ nodes are transmitting, there clearly also cannot
  be any progress. In rounds where exactly one node $v$ is
  transmitting, the message of $v$ only reaches its at most
  $\Delta=O(1)$ neighbors in $G$. Because we have $s$ broadcast
  messages of $B$ bits and because each broadcast message only has one
  source node, over the whole algorithm, the nodes in total need to
  learn $\Theta(nsB)$ bits of information. As every message sent by
  the algorithm can contain only $O(cB)$, in each round, the total
  number of bits learned by any node is also at most $O(cB)$. The
  lemma therefore follows.
\end{proof}

To prove the lower bound for constant-diameter stable subgraphs, we
use the hitting game technique introduced by Newport
in~\cite{newport:disc14}.  This is a general technique to prove lower
bounds for solving various communication problems in radio networks.
Using this technique, one first defines an appropriate combinatorial
game with respect to the problem such that a lower bound for the game
can be proved directly.  It is shown that an efficient solution
for the radio network problem helps a player to win the game
efficiently.  Consequently, the game's lower bound can be leveraged to
the problem's lower bound.
	
For the sake of proving this lower bound, we define a combinatorial
game called $(\alpha , \beta)$-hitting game.  Assuming the existence
of a distributed algorithm $\mathcal{A}$ which solves multi-message
broadcast in the desired setting, we will show that a player can
simulate the execution of $\mathcal{A}$ in an $\infty$-interval
connected dynamic network called the \textit{target network}. Then,
the player uses the transmitting behavior of the nodes in the target
network, while running $\mathcal{A}$, to win the game efficiently. 

\vspace{.1cm}
\noindent
\textbf{\boldmath$(\alpha , \beta)$-hitting Game.} The game is defined for two positive integers $\alpha$ and $\beta$, where $\beta < \alpha$. It proceeds in rounds, and a player, which is represented by a probabilistic automaton $\mathcal{P}$, plays the game against a referee. 
	At the beginning of the game, the referee arbitrarily partitions the set $[\alpha +\beta]$
	\footnote{For two integers $a\leq b$, $[a,b]$ denotes the set of all integers between $a$ and $b$ (including $a$ and $b$). Further, for an integer $a\geq 1$, we use $[a]$ as a short form to denote $[a]:=[1,a]$.} into two disjoint sets $A$ and $B$, such that $|A| = \alpha$ and $|B| = \beta$. 
	This partitioning can be seen by the player and its soul purpose is the ease of target network construction that will be explained later. 
	Accordingly, the referee selects $\beta$ elements from $A$ uniformly at random, and it generates a random permutation of these elements represented by $\langle a_1, a_2, \dots , a_\beta \rangle$. 
	The \textit{target set} $R$ is defined based on this sequence, given by
\[
R = \{ (a_1,1), (a_2,2), \dots , (a_\beta , \beta) \},
\]
	is kept secret from the player. 
	In each round, the player proposes a guess $(x,y)$, and the referee responses to the player whether the guess was in the target set or not. 
	In case the guess belongs to $R$, the referee removes the guess from $R$ at the end of the round. 
	The player wins the game after $r$ rounds, if and only if at the end of round $r$ the set $R$ is empty. 
	The following lemma shows a lower bound for this game, which is adapted from Lemma 3.2 in~\cite{obliviousDG}.

\begin{lemma}\label{lem:game}
	(Adapted from~\cite{obliviousDG}) There does not exist a player that can win the $(\alpha , \beta)$-hitting game in $o(\alpha \beta)$ rounds with high probability
	\footnote{We say that a probability event happens with high probability (w.h.p.) if it happens with probability at least $1-1/n^c$, where $n$ is the number of nodes and $c>0$ is a constant which can be chosen arbitrarily large by adjusting other constants.}.
\end{lemma}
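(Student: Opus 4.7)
The plan is to apply Yao's principle and argue that against the uniform random target set, any deterministic player fails with high probability when given only $o(\alpha\beta)$ rounds. Since the partition $(A,B)$ is announced to the player, we may without loss of generality restrict attention to players that only guess pairs $(x,y)$ with $x\in A$ and never repeat a past guess, as no information is gained otherwise.

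The core technical ingredient is the posterior distribution of each $a_y$ given the transcript of responses. A failed guess $(x,y)$ reveals only that $a_y \neq x$, and a successful guess on a different coordinate $y'\neq y$ reveals the value $a_{y'}$, which is likewise excluded from being $a_y$. Hence, after $j$ failed guesses on coordinate $y$ and $k\leq \beta-1$ successes on other coordinates, $a_y$ remains uniformly distributed over a set of size at least $\alpha-j-k$, so the conditional probability that the next guess on coordinate $y$ succeeds is at most $1/(\alpha-j-k)$. Invoking the principle of deferred decisions, the sequence of responses can equivalently be generated as a sequence of Bernoulli draws with these adaptively chosen conditional success probabilities.

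The player wins only if $S_r = \sum_{t=1}^r X_t$ equals $\beta$, where $X_t$ is the indicator that the $t$-th guess succeeds. Using the above posterior bound, a straightforward per-coordinate accounting shows $\mathbb{E}[S_r]=O(r/(\alpha-\beta))$, so for $r=o(\alpha\beta)$ and $\beta\leq\alpha/2$ (the relevant regime; the complementary case is strictly easier) we obtain $\mathbb{E}[S_r]=o(\beta)$. The main obstacle is then concentration, because the per-step success probability depends adaptively on the history and can in principle become large once many guesses have been made on a single coordinate. I would handle this by a martingale-based Chernoff argument applied to $(X_t)$, using the fact that the predictable bound $q_t \leq 1/(\alpha-\beta)$ holds in every round until at least $(\alpha-\beta)/2$ failed guesses have been made on some single coordinate, which itself costs $\Omega(\alpha)$ rounds. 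This yields $\Pr[S_r\geq\beta]\leq e^{-\Omega(\beta)}$, which is $n^{-c}$ whenever $\beta=\Omega(\log n)$, the regime in which ``with high probability'' is meaningful in this paper's conventions. Hence no player wins in $o(\alpha\beta)$ rounds w.h.p., as claimed.
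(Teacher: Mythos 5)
The paper does not actually prove Lemma~\ref{lem:game}; it is imported from Lemma~3.2 of \cite{obliviousDG} with the adaptation left implicit, so there is no in-paper argument to compare against. Your overall plan --- Yao's principle, a per-guess bound on the posterior hitting probability, then a concentration argument --- is the natural route and matches the spirit of the cited proof. However, as written it contains a step that is simply false and two genuine gaps.

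The false step is the claim that after $j$ failed guesses on coordinate $y$ and $k$ successes elsewhere, $a_y$ ``remains uniformly distributed over a set of size at least $\alpha-j-k$.'' Failed guesses on \emph{other} coordinates also constrain $a_y$ through the distinctness of $a_1,\dots,a_\beta$: with $\alpha=3$, $\beta=2$, two failed guesses on coordinate $2$ pin down $a_2$ and leave $a_1$ uniform on only two candidates, so a first guess on coordinate $1$ succeeds with probability $1/2$, exceeding your bound of $1/3$. The correct, provable bound is $\Pr[a_y=x\mid \text{transcript}]\le 1/(\alpha-\beta+1-j_y)$, obtained by conditioning additionally on the values of all other coordinates (under which $a_y$ really is uniform on at least $\alpha-\beta+1-j_y$ candidates) and averaging. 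In the regime $\beta\le\alpha/2$ this costs only a constant factor, but it is the bound you must work with --- and this makes your dismissal of $\beta>\alpha/2$ as ``strictly easier'' untenable: that is exactly the regime where $\alpha-\beta+1-j_y$ can collapse, so it is the \emph{harder} case for the lower bound and needs its own accounting (and it is needed, since the application takes $\alpha=n-s$, $\beta=s$ with $s$ possibly larger than $n/3$). Finally, the concentration step is incomplete: within $r=o(\alpha\beta)$ rounds a player \emph{can} accumulate $(\alpha-\beta)/2$ failed guesses on some coordinates --- just on at most $O(r/\alpha)=o(\beta)$ of them. The missing argument is that the remaining $\beta-o(\beta)$ lightly probed coordinates each retain per-guess success probability $O(1/\alpha)$, so the number of them that get resolved is stochastically dominated by a $\mathrm{Bin}(r,O(1/\alpha))$ variable with mean $o(\beta)$, whence all of them are resolved only with probability $e^{-\Omega(\beta)}$; together with a separate trivial argument for small $\beta$ (a single coordinate cannot even be hit with constant probability in $o(\alpha)$ guesses), this closes the proof for $\beta\le\alpha/2$.
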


\begin{lemma}\label{lem:simulation}
	Let $s$ and $n$ be positive integers, where $s<n$. If algorithm $\mathcal{A}$ solves multi-message broadcast problem with $s$ source nodes in any $\infty$-interval connected dynamic $n$-node network against a $0$-oblivious adversary in $f(n,s)$ rounds in expectation, then it is possible to win the $(n-s, s)$-hitting game in expected $O(f(n,s))$ rounds.  
\end{lemma}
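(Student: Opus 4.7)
The plan is to let the player simulate $\mathcal{A}$ on a carefully chosen $\infty$-interval connected $n$-node target network, drawing all of $\mathcal{A}$'s random coins herself and using the hitting-game referee as an oracle that supplies the $0$-oblivious adversary's round-by-round topology decisions. Each simulated round of $\mathcal{A}$ is turned into at most one guess in the game, so an execution of expected length $f(n,s)$ translates into a game strategy of expected length $O(f(n,s))$, as required.

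I would build the target network on node set $[n] = A \cup B$ (using the partition disclosed by the referee), equipped with a fixed constant-diameter stable backbone $G^*$ which is symmetric inside $A$ but has the property that every information route out of a source must pass through a single ``bottleneck'' relay in $A$. The $s$ nodes of $B$ are declared the source nodes, each holding one of the broadcast messages $\mathcal{M}_1, \dots, \mathcal{M}_s$, and the $n-s$ nodes of $A$ are non-source, with the hidden $a_1, \dots, a_s$ acting as the bottlenecks for $\mathcal{M}_1, \dots, \mathcal{M}_s$ respectively. Because $G^*$ is symmetric inside $A$, $\mathcal{A}$ has no information distinguishing the $a_i$ from the rest of $A$. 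In each round $r$ of the simulation, the player reads off from the coins the transmitting set $T_r$ and the payloads, and then picks $G_r$ as follows. If $|T_r| \neq 1$, she augments $G^*$ with auxiliary edges to create mutual collisions at every listener, so that no message is delivered and no guess is made. If $T_r = \{v\}$ and $v$ is about to send $\mathcal{M}_i$, she submits the guess $(v, i)$ to the referee: on answer ``yes'' (i.e.\ $v = a_i$) she lets the round go through by setting $G_r := G^*$, so that $v$'s transmission reaches all its backbone neighbors; on answer ``no'' she again jams $v$ with extra edges. Since $G^* \subseteq G_r$ in every round, the simulated network is $\infty$-interval connected, and the player's view is indistinguishable from an execution of $\mathcal{A}$ against a legitimate $0$-oblivious adversary following this policy.

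The key invariant to verify is that $\mathcal{A}$ cannot terminate successfully unless for every $i \in [s]$ the guess $(a_i, i)$ has been submitted at least once. This will follow from the design of $G^*$ together with the jamming strategy: $\mathcal{M}_i$ can only propagate beyond the neighborhood of source $b_i$ via a round in which $a_i$ is the unique transmitter of $\mathcal{M}_i$, because in every other round the adversary's extra edges break every alternative route. Consequently, any run of $\mathcal{A}$ that delivers all $s$ messages to all nodes empties the target set $R$, thereby winning the game; combining ``at most one guess per simulated round'' with the expected-$f(n,s)$ round complexity of $\mathcal{A}$ yields the bound claimed in the lemma.

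The main obstacle is precisely the construction of $G^*$ together with the adversary's jamming edges realizing the invariant above: one must exhibit a constant-diameter graph on $A \cup B$ that is symmetric inside $A$ yet has an unavoidable one-bottleneck-per-message structure under the $0$-oblivious adversary, so that no scheduling trick of $\mathcal{A}$ can deliver some $\mathcal{M}_i$ to every node without forcing the guess $(a_i,i)$. This is the combinatorial heart of the $\Omega((ns-s^2)/c)$ part of Theorem~\ref{thm:05}. Once such a target network is designed, the simulation above plugs into Lemma~\ref{lem:game} to complete the proof.
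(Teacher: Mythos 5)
Your reduction has the right skeleton (simulate $\mathcal{A}$ round by round, emit at most one guess per round, argue that winning the broadcast forces the target set to be emptied, then invoke Lemma~\ref{lem:game}), but the concrete construction you sketch is both deferred and, in the orientation you chose, unrealizable. You place the $s$ sources at the peripheral nodes of $B$ and want the secret $a_i\in A$ to be a bottleneck that $\mathcal{M}_i$ must cross on its way \emph{out}. Two things break. First, your ``jam $v$ on answer no'' step cannot be implemented: a collision requires at least two transmitting neighbors, so when $v$ is the unique transmitter you cannot suppress delivery by \emph{adding} edges, and you cannot delete $v$'s edges in the stable backbone $G^*$ without violating $\infty$-interval connectivity. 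Hence the source $b_i$'s first solo transmission necessarily delivers $\mathcal{M}_i$ to $b_i$'s stable neighbor(s) regardless of the referee's answer. If that neighbor is the secret $a_i$, the player cannot even continue the simulation consistently, because it does not know which node's state to update; if the neighbor is a publicly known node, there is no secret left to guess and the bottleneck invariant fails. Second, you explicitly leave the design of $G^*$ (``the combinatorial heart'') open, which is precisely the content the lemma's proof must supply.

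The paper resolves this by reversing the roles: the sources and all broadcast messages live \emph{inside} the clique on $A$, and the external nodes $e_i$ (with IDs in $B$) are the \emph{receivers}; the only stable edge incident to $e_i$ goes to the bridge node $a_i$. The adversary keeps $G_r$ equal to the complete graph whenever two or more nodes transmit (global collision) or whenever the lone transmitter of $\mathcal{M}_i$ is $a_i$ (so $e_i$ learns $\mathcal{M}_i$), and otherwise sets $G_r$ to the complete graph minus the single non-stable edge between the lone transmitter of $\mathcal{M}_i$ and $e_i$. Everything inside the (near-)complete graph is fully determined and simulable by the player, and the one bit the player does not know --- whether the lone transmitter happens to be $a_i$ --- is exactly the referee's answer to the guess $(\mathrm{ID},i)$. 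Since $e_i$ can only ever receive $\mathcal{M}_i$ from $a_i$ transmitting alone, completing the broadcast forces every pair $(a_i,i)$ to be guessed, which is what the lemma needs. You should rework your construction along these lines; as written, the proposal does not establish the lemma.
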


\begin{proof}
  For the following discussion in this proof, we fix an arbitrary
  instance $\mathcal{I}$ of the $(\alpha , \beta)$-hitting game, where
  $\alpha = n-s$ and $\beta = s$.  Based on $\mathcal{I}$, we first
  define a particular $\infty$-interval connected dynamic $n$-node
  network called \textit{target network}.  To guarantee the
  $\infty$-interval connectivity of the target network we show that
  there exits a fixed connected $n$-node graph we call the
  \textit{$s$-clique-star} which is a subgraph of the graph
  representations of the target network in all rounds of any 
  execution.  Since one cannot construct the target network based on
  an instance of the game with no knowledge about the secret
  information that the referee has, the player simulates the execution
  of $\mathcal{A}$ on the target network.  However, we will see later
  that with the information revealed gradually by the referee during
  the game and the power of a $0$-oblivious adversary, the simulation by
  the player is completely consistent with the execution of
  $\mathcal{A}$ on the target network.  Then we show that how the
  simulation allows the player to win the game by the end of the
  broadcast.  We start with the definition of the $s$-clique-star
  graph and the target network.

\noindent
\textbf{\boldmath$s$-Clique-Star Graph.} 
	Consider an $n$-node static graph $G$ which is defined for some positive integer $s < n$. 
	The nodes are partitioned into two disjoint sets of size $s$ and $n-s$, such that the $n-s$ nodes form a clique, and each of the $s$ nodes is connected to exactly one node in the clique. 
	Each of these $s$ nodes is called an \textit{external node}, and any node in the clique which is connected to at least one external node is called a \textit{bridge node}. Rest of the nodes are called \textit{internal nodes}.
You can see the graph in Figure~\ref{fig:star}.

\begin{figure}[h]
	\centering
\begin{tikzpicture} 	
	\draw (0,0) circle (1.5);
	\foreach \x in {30, 90,
						    165, 210, 
						   270, 330}
	{
	\draw[fill]  (\x:1.2cm) circle (.05);
	}

	\foreach \x in {0, 30, 90,
						    165, 210, 
						   270, 330}
	{
	\draw[fill]  (\x:2.3cm) circle (.05);
	}
	
	\foreach \x in {30, 90,
						    165, 210, 
						   270, 330}
	{
	\draw  (\x:1.2cm) -- (\x:2.3cm);
	}
	\draw  (0:2.3cm) -- (330:1.2cm);
	
	\foreach \x in {125, 135, 145}
	{
	\draw[fill]  (\x:1.7cm) circle (.03);
	}
	
	\node at (30:2.6cm) {$e_1$};
	\node at (0:2.6cm) {$e_2$};
	\node at (330:2.6cm) {$e_3$};
	\node at (270:2.6cm) {$e_4$};
	\node at (210:2.6cm) {$e_5$};
	\node at (165:2.6cm) {$e_6$};
	\node at (90:2.6cm) {$e_s$};
	
	\node at (30:.9cm) {$b_1$};
	\node at (330:.9cm) {$b_2$};
	\node at (270:.9cm) {$b_3$};
	\node at (210:.9cm) {$b_4$};
	\node at (165:.9cm) {$b_5$};
	\node at (90:.9cm) {$b_j$};
\end{tikzpicture}
\caption{The $s$-clique-star graph. For some positive integer $j\leq s$, $b_1, b_2, \dots , b_j$ are bridge nodes, and $e_1, e_2, \dots , e_s$ are external nodes.}
\label{fig:star}
\end{figure}
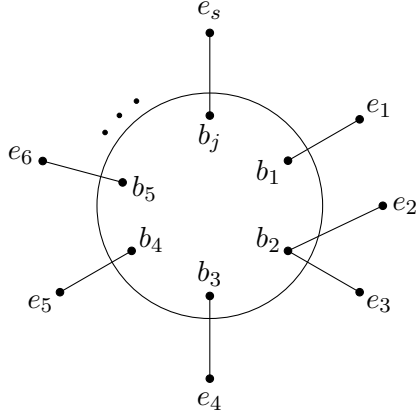

\noindent
\textbf{The Target Network.}
	The target network is an $\infty$-interval connected dynamic $n$-node network $G^t = \langle G_1, G_2, \dots \rangle$, such that for all $r\geq 1$, $G_r$ is either a complete graph or a complete graph lacking exactly one edge.
	Let us assume that in the game instance $\mathcal{I}$, the partitioned sets determined by the referee are $A$ and $B$, and the target set is
\[
R = \big\{(a_1,1), (a_2, 2), \dots , (a_s, s)\big\},
\]
	such that among $a_1, \dots , a_s$, there are $j$ different values, where $j\leq s$. 
	Based on this instance of the game we first construct and instantiate
	\footnote{By instantiation of an $n$-node network,
we mean assigning $n$ processes with unique IDs to the nodes of the network.}
	 a fixed $n$-node $s$-clique-star graph $G^{cs}$ with $j$ bridge nodes. 
	 Then for any $i\geq 1$, we construct $G_i$ by adding edges to $G^{cs}$. 
	 
	 We construct and instantiate $G^{cs}$ as follows. We have $n$ processes with IDs from $1$ to $n$.
	The processes with IDs equal to the mentioned $j$ different values are assigned to the $j$ bridge nodes of $G^{cs}$. 
	Additionally, the $s$ processes whose IDs belong to $B$ are assigned to the external nodes. 
	We connect each external node $e_i$, where $1\leq i \leq s$, to the bridge node with ID equal to $a_i$. 
	Moreover, we randomly assign the broadcast messages
        $\mathcal{M}_1:=1, \mathcal{M}_2:=2, \dots , \mathcal{M}_s:=s$
        to processes assigned to the clique.

	\hide{In addition, we make sure that this external node has
          all the messages except the broadcast message $\mathcal{M}_i$ (this state of the target network can easily be achieved by the player with the power of a $0$-oblivious adversary, without violating $\infty$-interval connectivity property, which will be explained later).}

	In any round $r$, for constructing the target network we consider the following three possibilities of transmitting behavior of the processes in $G^t$: $(1)$ more than one node transmit, $(2)$ only one node transmits and it is a bridge node transmitting $\mathcal{M}_i$ such that node $e_i$ is its neighbor in $G^{cs}$, $(3)$ otherwise.
	In case of possibility (1) or (2), $G_r$ is a complete graph over all $n$ nodes. 
	In case of possibility (3), $G_r$ contains all edges among the nodes except the edge between the transmitter which is transmitting $\mathcal{M}_i$ and the node $e_i$. 

	For a complete construction and instantiation of the target network based on an instance of the game, one needs to also have the secret information that the referee of the game has (since it sometimes needs to recognize the bridge nodes). However, we show that without knowing the secret information the player can simulate execution of $\mathcal{A}$ on the target network with the gradual information that the referee reveals during the game. 

\vspace{.1cm}
\noindent
\textbf{The Simulation.}
	The player simulates the execution of $\mathcal{A}$ on the target network round by round. 
	In each round, according to the transmitting behavior of the nodes it generates at most one guess for the game. 
	Then playing the game for one round with the generated guess (if any), the player finishes the simulation of the current round by simulating the receive behavior of the nodes. 
	It continues simulating next rounds and playing the game until it wins the game. 
	In each round that only one node transmits, if it is not an external node, the player generates a guess consisting of its ID and the message it transmits. 			Otherwise, it does not generate any guess.
	 
	The receive behavior of the nodes in each round $r$ is simulated by the player based on the transmitting behavior of the nodes and the response of the referee to the generated guess.
	If more than one node transmits in a round, the player simulates all the nodes to receive collision. 
	If an external node is the only transmitter in round $r$ and it is transmitting message $\mathcal{M}_i$, then all the nodes except $e_i$ receive $\mathcal{M}_i$ successfully. 
	If a non-external node transmitting $\mathcal{M}_i$ is the only transmitter in round $r$, then the player generates a guess and plays the game. 
	If the guess is correct, then all the nodes receive $\mathcal{M}_i$. 
	Otherwise, all the nodes except $e_i$ receives $\mathcal{M}_i$ successfully. 
	
	The receive behavior of the nodes in the simulation is completely consistent with that of the execution of $\mathcal{A}$ on the target network. 
	In both of them whenever more than one node transmits all the nodes receive collision. 
	The only case in some round of the execution of $\mathcal{A}$ the target network is not a complete graph is when that a bridge node which has neighbor $e_i$ in $G^{cs}$ and transmitting $\mathcal{M}_i$. 
	And this case can be recognized by the player when it plays the game with the corresponding generated guess. 
	Therefore, we can conclude that the simulation by the player is completely consistent with the execution of $\mathcal{A}$ on the target network. 
	
	By the end of the simulation, the player wins the game.
	Considering the receive behavior of the nodes in the
        simulation, the only way that an external node receives the broadcast message $\mathcal{M}_i$ is to receive it directly from its corresponding bridge node in $G^{cs}$. 
	And, it has to happen in a round that there exists only one transmitter since otherwise all nodes receive collision.
	Moreover, by the end of the simulation, for all $i$, the external node $e_i$ should have received the message $\mathcal{M}_i$.  
	Hence, for all $i$, there should exists a round in the simulation that a bridge node transmitting $\mathcal{M}_i$ which is a neighbor of $e_i$ in $G^{cs}$ is the only transmitter. 
	Therefore, based on the guess generation approach of the player, we can conclude that the player guesses all the elements of the target set by the end of the broadcasting algorithm simulation.  
	
	Finally since in each round of the simulation the player generated at most one guess, if $\mathcal{A}$ solves the multi-message broadcast problem with $s<n$ source nodes in any $\infty$-interval connected network in $f(n,s)$ rounds, a player can win any instance of $(n-s, s)$-hitting game in at most $f(n,s)$ rounds by simulating the execution of $\mathcal{A}$ on the target network which is constructed based on the given instance of the game. 
\end{proof}

\noindent
\textbf{Theorem~\ref{thm:05} (restated).}
  In $\infty$-interval connected dynamic networks with communication
  capacity $c\geq 1$ and a $0$-oblivious adversary, any
  $s$-multi-message broadcast algorithm requires at least time
  $\Omega(ns/c)$, even when using network coding.  Further, there is a
  constant-diameter $\infty$-interval connected network with
  communication capacity $1$ such that any store-and-forward algorithm
  requires at least $\Omega\big((ns-s^2)/c\big)$ rounds to solve $s$-multi-message
  broadcast against a $0$-oblivious adversary.

\begin{proof}
	For the sake of contradiction let us assume that there exists an algorithm $\mathcal{A}$ that can solve the multi-message broadcast for $s$ source nodes in any $\infty$-interval connected dynamic $n$-node network in $f(n,s) = o(ns-s^2)$. Then based on Lemma~\ref{lem:simulation}, we can construct a player to win any instance $(\alpha, \beta)$-hitting game in $o(\alpha \beta)$ rounds. This contradicts Lemma~\ref{lem:game}. 
\end{proof}


\section{Conclusions}
In the paper, we studied multi-message broadcast in $T$-interval
connected radio networks with different adaptive adversaries. In all
considered cases, we shows that if $c$ broadcast messages can be
packed into a single message (of the algorithm), $s$ broadcast
messages can essentially be broadcast in $s/c$ times the time required
to broadcast a single message. In one case ($\infty$-interval
connected dynamic networks with a $0$-oblivious adversary), we also
showed that up to logarithmic factors, our algorithm is optimal. Note
that using techniques from \cite{obliviousDG,opodis15}, at the cost of
one logarithmic factor, this lower bound can also be adapted to work
in the presence of a $1$-oblivious adversary.

A multi-message broadcast time which is roughly $s/c$ times as large
as the time for broadcasting a single message seems not very
spectacular. Such an algorithm essentially always runs just one
single-message broadcast algorithm at each point in time (where for
$c>1$, the algorithm each time broadcasts a collection of
messages). However, we believe that it will be interesting to see
whether the time complexity can be significantly improved in any of 
the adversarial dynamic network settings considered in this
paper. When using store-and-forward algorithms, such an improvement
would imply that the algorithm can use some form of pipelining in an
efficient manner. I might also be interesting to study somewhat weaker
(adversarial) dynamic network models which allow some pipelining when
broadcasting multiple messages.



\vspace{1cm}
\bibliographystyle{abbrv}
\bibliography{references}

\begin{thebibliography}{10}

\bibitem{opodis15}
M.~Ahmadi, A.~Ghodselahi, F.~Kuhn, and A.~R. Molla.
\newblock The cost of global broadcast in dynamic radio networks.
\newblock In {\em Proc. of the 19th Int.\ Conf.\ on Principles of Distributed
  Systems (OPODIS)}, 2015.

\bibitem{fernandezanta12}
A.~F. Anta, A.~Milani, M.~A. Mosteiro, and S.~Zaks.
\newblock Opportunistic information dissemination in mobile ad-hoc networks:
  the profit of global synchrony.
\newblock {\em Distributed Computing}, 2012.

\bibitem{avin08}
C.~Avin, M.~Kouck\'{y}, and Z.~Lotker.
\newblock How to explore a fast-changing world (cover time of a simple random
  walk on evolving graphs).
\newblock In {\em Proc. of the 5th Coll. on Automata, Languages and Programming
  (ICALP)}, 2008.

\bibitem{bgi2}
R.~Bar-Yehuda, O.~Goldreich, and A.~Itai.
\newblock Efficient emulation of single-hop radio network with collision
  detection on multi-hop radio network with no collision detection.
\newblock {\em Distributed Computing}, 1991.

\bibitem{bgi1}
R.~Bar-Yehuda, O.~Goldreich, and A.~Itai.
\newblock On the time-complexity of broadcast in multi-hop radio networks: An
  exponential gap between determinism and randomization.
\newblock {\em Computer and System Sciences}, 1992.

\bibitem{baumann09}
H.~Baumann, P.~Crescenzi, and P.~Fraigniaud.
\newblock Parsimonious flooding in dynamic graphs.
\newblock In {\em Proc. of the 28th ACM Symp. on Principles of Distributed
  Computing (PODC)}, 2009.

\bibitem{time-varying}
A.~Casteigts, P.~Flocchini, W.~Quattrociocchi, and N.~Santoro.
\newblock Time-varying graphs and dynamic networks.
\newblock In {\em Proc. of the 10th Int. Conf. on Ad-hoc, Mobile, and Wireless
  Networks}, 2011.

\bibitem{DG_structuring}
K.~Censor-Hillel, S.~Gilbert, F.~Kuhn, N.~Lynch, and C.~Newport.
\newblock Structuring unreliable radio networks.
\newblock {\em Distributed Computing}, 2014.

\bibitem{chlamtac85}
I.~Chlamtac and S.~Kutten.
\newblock On broadcasting in radio networks--problem analysis and protocol
  design.
\newblock {\em IEEE Transactions on Communications}, 1985.

\bibitem{chlebus:2000}
B.~S. Chlebus, L.~Gasieniec, A.~Ostlin, and J.~M. Robson.
\newblock Deterministic radio broadcasting.
\newblock In {\em Proc. of the 27th Int. Coll. on Automata, Languages and
  Programming (ICALP)}, 2000.

\bibitem{Chlebus09}
B.~S. Chlebus, D.~R. Kowalski, and T.~Radzik.
\newblock Many-to-many communication in radio networks.
\newblock {\em Algorithmica}, 2009.

\bibitem{Christersson}
M.~Christersson, L.~Gasieniec, and A.~Lingas.
\newblock Gossiping with bounded size messages in ad hoc radio networks.
\newblock In {\em Proc. of the 29th Int. Coll. on Automata, Languages and
  Programming (ICALP)}, 2002.

\bibitem{chrobak:2004}
M.~Chrobak, L.~Gasieniec, and W.~Rytter.
\newblock A randomized algorithm for gossiping in radio networks.
\newblock {\em Networks}, 2004.

\bibitem{clementi09}
A.~Clementi, A.~Monti, F.~Pasquale, and R.~Silvestri.
\newblock Broadcasting in dynamic radio networks.
\newblock {\em Computer and System Sciences}.

\bibitem{clementi:2001}
A.~Clementi, A.~Monti, and R.~Silvestri.
\newblock Selective families, superimposed codes, and broadcasting on unknown
  radio networks.
\newblock In {\em Proc. of the ACM-SIAM Symp. on Discrete Algorithms (SODA)},
  2001.

\bibitem{clementi04}
A.~Clementi, A.~Monti, and R.~Silvestri.
\newblock Round robin is optimal for fault-tolerant broadcasting on wireless
  networks.
\newblock {\em Parallel \& Distributed Computing}, 2004.

\bibitem{gossip}
S.~Deb, M.~M{\'e}dard, and C.~Choute.
\newblock Algebraic gossip: A network coding approach to optimal multiple rumor
  mongering.
\newblock {\em IEEE/ACM Transactions on Information Theory}, 2006.

\bibitem{gasieniec}
L.~Gasieniec, E.~Kranakis, A.~Pelc, and Q.~Xin.
\newblock Deterministic m2m multicast in radio networks.
\newblock {\em Theoretical Computer Science}, 2006.

\bibitem{maclayer}
M.~Ghaffari, E.~Kantor, N.~Lynch, and C.~Newport.
\newblock Multi-message broadcast with abstract mac layers and unreliable
  links.
\newblock In {\em Proc. of the 2014 ACM Symp. on Principles of Distributed
  Computing (PODC)}, 2014.

\bibitem{obliviousDG}
M.~Ghaffari, N.~Lynch, and C.~Newport.
\newblock The cost of radio network broadcast for different models of
  unreliable links.
\newblock In {\em Proc. of the 32nd Symp.\ on Principles of Distributed
  Computing (PODC)}, 2013.

\bibitem{ghaffari16}
M.~Ghaffari and C.~Newport.
\newblock Leader election in unreliable radio networks.
\newblock In {\em Proc. of the 43rd Int.\ Coll.\ on Automata, Languages and
  Programming (ICALP)}, 2016.

\bibitem{gupta:2000}
P.~Gupta and P.~R. Kumar.
\newblock The capacity of wireless networks.
\newblock {\em IEEE Transactions on Information Theory}, 2000.

\bibitem{haeupler11}
B.~Haeupler.
\newblock Analyzing network coding gossip made easy.
\newblock In {\em Proc.of the 43rd ACM Symp. on Theory of Computing}, 2011.

\bibitem{khabbazian11}
M.~Khabbazian and D.~R. Kowalski.
\newblock Time-efficient randomized multiple-message broadcast in radio
  networks.
\newblock In {\em Proc. of the 30th ACM SIGACT-SIGOPS Symp. on Principles of
  Distributed Computing (PODC)}.

\bibitem{kim06}
K.-H. Kim and K.~G. Shin.
\newblock On accurate measurement of link quality in multi-hop wireless mesh
  networks.
\newblock In {\em Proc. of the 12th Int. Conf. on Mobile Computing and
  Networking (MOBICOM)}, 2006.

\bibitem{affectance}
D.~R. Kowalski, M.~A. Mosteiro, and T.~Rouse.
\newblock Dynamic multiple-message broadcast: Bounding throughput in the
  affectance model.
\newblock In {\em Proc. of the 10th ACM Int. Workshop on Foundations of Mobile
  Computing}, 2014.

\bibitem{kranakis01}
E.~Kranakis, D.~Krizanc, and A.~Pelc.
\newblock Fault-tolerant broadcasting in radio networks.
\newblock {\em Algorithms}, 2001.

\bibitem{kuhn:2011}
F.~Kuhn, N.~Lynch, and C.~Newport.
\newblock The abstract mac layer.
\newblock {\em Distributed Computing}, 2011.

\bibitem{dualgraph}
F.~Kuhn, N.~Lynch, C.~Newport, R.~Oshman, and A.~W. Richa.
\newblock Broadcasting in unreliable radio networks.
\newblock In {\em Proc. of the 29th Symp.\ on Principles of Distributed
  Computing (PODC)}, 2010.

\bibitem{kuhn:stoc10}
F.~Kuhn, N.~Lynch, and R.~Oshman.
\newblock Distributed computation in dynamic networks.
\newblock In {\em Proc. of the 42nd Symp.\ on Theory of Computing (STOC)},
  2010.

\bibitem{kuhn:2011:survey}
F.~Kuhn and R.~Oshman.
\newblock {Dynamic Networks: Models and Algorithms}.
\newblock {\em ACM SIGACT News}, 2011.

\bibitem{DG_localbroadcast}
N.~Lynch and C.~Newport.
\newblock A (truly) local broadcast layer for unreliable radio networks.
\newblock In {\em Proc.of the 34th Symp.\ on Principles of Distributed
  Computing (PODC)}, 2015.

\bibitem{moscibroda06}
T.~Moscibroda and R.~Wattenhofer.
\newblock The complexity of connectivity in wireless networks.
\newblock In {\em Proc.\ of the 25th Conf.\ on Computer Communications
  (INFOCOM)}, 2006.

\bibitem{newport:disc14}
C.~Newport.
\newblock Radio network lower bounds made easy.
\newblock In {\em Proc. of the 28th Int. Symp. on Distributed Computing
  (DISC)}. 2014.

\bibitem{newport:2007}
C.~Newport, D.~Kotz, Y.~Yuan, R.~S. Gray, J.~Liu, and C.~Elliott.
\newblock Experimental evaluation of wireless simulation assumptions.
\newblock {\em Simulation}, 2007.

\bibitem{odell05}
R.~O'Dell and R.~Wattenhofer.
\newblock Information dissemination in highly dynamic graphs.
\newblock In {\em Proc. of Workshop on Foundations of Mobile Computing
  (DIALM-POMC)}, 2005.

\bibitem{ramachandran07}
K.~Ramachandran, I.~Sheriff, E.~Belding, and K.~Almeroth.
\newblock Routing stability in static wireless mesh networks.
\newblock In {\em Proc. of the 8th Int. Conf.\ on Passive and Active Network
  Measurement}, 2007.

\bibitem{srinivasan08}
K.~Srinivasan, M.~A. Kazandjieva, S.~Agarwal, and P.~Levis.
\newblock The $\beta$-factor: Measuring wireless link burstiness.
\newblock In {\em Proc. of the 6th Conf. on Embedded Networked Sensor System},
  2008.

\bibitem{yarvis02}
M.~D. Yarvis, S.~W. Conner, L.~Krishnamurthy, J.~Chhabra, B.~Elliott, and
  A.~Mainwaring.
\newblock Real-world experiences with an interactive ad hoc sensor network.
\newblock In {\em Proc. of the Conf.\ on Parallel Processing}, 2002.

\end{thebibliography}

\end{document}